\definecolor{niceRed}{RGB}{190,38,38}
\definecolor{niceBlue}{HTML}{0466a7}
\def\BIBand{and}%
\newcommand{\game}{G} 
\newcommand{\play}{i} 
\newcommand{\others}{-\play} 
\newcommand{\players}{\mathcal{N}} 
\newcommand{\nplayers}{N} 
\newcommand{\pure}{a} 
\newcommand{\purealt}{a'} 
\newcommand{\pures}{\mathcal{A}} 
\newcommand{\npures}{A} 
\newcommand{\pay}{u} 
\newcommand{\pays}{\mathcal{U}} 
\newcommand{\gamefull}{\game = \ll \players, \pures, \pay \rr} 
\newcommand{\rgraph}{\Gamma (\players, \pures)} 
\newcommand{\dev}{(\pure, \purealt)} 
\newcommand{\devs}{\mathcal{E}} 
\newcommand{\ndevs}{E} 
\newcommand{\types}{\mathcal{V}} 
\newcommand{\pot}{\phi} 
\newcommand{\pots}{C_0} 
\newcommand{\dmap}{D} 
\newcommand{\boundary}{\partial} 
\newcommand{\coboundary}{d} 
\newcommand{\up}{\coboundary} 
\newcommand{\down}\boundary 
\newcommand{\grad}{\coboundary_0} 
\renewcommand{\div}{\boundary_1} 
\newcommand{\games}{\pays} 
\newcommand{\flows}{C_1} 
\newcommand{\feasflows}{\im{\dmap}} 
\newcommand{\flow}{X} 
\newcommand{\dflow}{\dmap\pay} 
\newcommand{\pflow}{\dflow_p} 
\newcommand{\pflows}{\im{\grad}} 
\newcommand{\hflow}{\dflow_h} 
\newcommand{\hflows}{\ker{\div}} 
\newcommand{\pgames}{\dmap^{-1}\pflows} 
\newcommand{\hgames}{\dmap^{-1}\hflows} 
\newcommand{\kerDgames}{\ker\dmap} 
\newcommand{\ppays}{\mathcal{P}} 
\newcommand{\hpays}{\mathcal{H}} 
\newcommand{\npays}{\mathcal{K}} 
\newcommand{\ppay}{\pay_{\ppays}} 
\newcommand{\hpay}{\pay_{\hpays}} 
\newcommand{\npay}{\pay_{\npays}} 
\newcommand{\normppays}{\ppays} 
\newcommand{\normhpays}{\hpays} 
\newcommand{\nonstpays}{\npays} 
\newcommand{\normpays}{\orth{\nonstpays}} 
\newcommand{\pproj}{e} 
\newcommand{\normproj}{\Pi} 
\renewcommand{\P}{P} 
\renewcommand{\ll}{\left(}
\newcommand{\rr}{\right)}
\newcommand{\from}{:}
\newcommand{\imap}[3]{
#1 \from \, #2 \rightarrow #3
}
\newcommand{\map}[5]{
\begin{aligned}
#1 \from \, &#2 \to #3 \\
       & #4 \longmapsto #5
\end{aligned}
}
\newcommand{\define}[1]{\textit{#1}}
\newcommand{\defeq}{:=}
\newcommand{\abs}[1]{|#1|}
\newcommand{\norm}[1]{||#1||}
\newcommand{\orth}[1]{#1^{\perp}}
\newcommand{\pinv}[1]{\tilde{#1}}
\newcommand{\inner}[2]{\langle #1 ,  #2 \rangle}
\newcommand{\adj}[1]{#1^{\dagger}}
\DeclareMathOperator{\im}{Im}
\DeclareMathOperator{\dist}{\delta}
\newcommand{\alt}[1]{#1'}
\newcommand{\N}{\players}
\newcommand{\A}{\pures}
\DeclareMathOperator{\Acal}{\mathcal{A}}
\DeclareMathOperator{\R}{\mathbb{R}}
\newtheoremstyle{spaced}%
  {10pt}   
  {10pt}   
  {\itshape} 
  {}       
  {\bfseries} 
  {.}      
  { }      
  {\thmname{#1}\thmnumber{ #2}\thmnote{ (#3)}} 
\newtheorem{theorem}{Theorem}
\newtheorem*{theorem*}{Theorem}
\newtheorem{proposition}[theorem]{Proposition}
\newtheorem{definition}{Definition}
\newtheorem{observation}{Observation}
\newtheorem*{remark*}{Remark}
\title{Characterizing the Convergence of Game Dynamics via Potentialness}
\newif\ifuniqueAffiliation
\date{}
\author{
	Martin Bichler \\
	\small Department of Computer Science\\
    \small Technical University of Munich\\
	\small \texttt{m.bichler@tum.de} \\
	\And
	Davide Legacci\\
	\small Univ. Grenoble Alpes, CNRS,\\
    \small Inria, Grenoble INP, LIG \\
	\small \texttt{davide.legacci@univ-grenoble-alpes.fr}\\
	\And
    Panayotis Mertikopoulos\\
    \small Univ. Grenoble Alpes, CNRS,\\ 
    \small Inria, Grenoble INP, LIG\\
    \small \texttt{panayotis.mertikopoulos@imag.fr}
    \And 
	Matthias Oberlechner\\
	\small Department of Computer Science\\
	\small Technical University of Munich \\
	\small \texttt{matthias.oberlechner@tum.de}\\
    \And 
    Bary Pradelski\\
    \small Maison Française d'Oxford \\
    \small CNRS \\
    \small \texttt{bary.pradelski@cnrs.fr}\\
}
\begin{document}
\maketitle

\begin{abstract}
  Understanding the convergence landscape of multi-agent learning is a fundamental problem of great practical relevance in many applications of artificial intelligence and machine learning.
  While it is known that learning dynamics converge to Nash equilibrium in potential games, the behavior of dynamics in many important classes of games that do not admit a potential is poorly understood.
 To measure how ``close'' a game is to being potential, we consider a distance function, that we call ``potentialness'', and which relies on a strategic decomposition of games introduced by \citet{candogan2011flows}.
  We introduce a numerical framework enabling the computation of this metric, which we use to calculate the degree of ``potentialness'' in generic matrix games, as well as (non-generic)  games that are important in economic applications, namely auctions and contests. Understanding learning in the latter games has become increasingly important due to the wide-spread automation of bidding and pricing with no-regret learning algorithms. 
  We empirically show that potentialness decreases and concentrates with an increasing number of agents or actions;
  in addition, potentialness turns out to be a good predictor for the existence of pure Nash equilibria and the convergence of no-regret learning algorithms in matrix games.
  In particular, we observe that potentialness is very low for complete-information models of the all-pay auction where no pure Nash equilibrium exists, and much higher for Tullock contests, first-, and second-price auctions, explaining the success of learning in the latter. In the incomplete-information version of the all-pay auction, a pure Bayes-Nash equilibrium exists and it can be learned with gradient-based algorithms. Potentialness nicely characterizes these differences to the complete-information version. 
\end{abstract}

\renewcommand{\thefootnote}{\relax} 
\footnotetext{Published in Transactions on Machine Learning Research (03/2025).}
\renewcommand{\thefootnote}{\arabic{footnote}} 

\newpage
\section{Introduction}
\label{sec:introduction}
Multi-agent systems and multi-agent learning have drawn considerable attention, owing to  their massive deployment in machine learning enabled applications and their increasing economic impact, with agents automatically ordering goods, setting prices, or bidding in auctions \citep{yang2020overview}.
In contrast to other applications of machine learning, the input in multi-agent learning is non-stationary and depends on the strategic behavior and learning of other agents, which leads to challenging computation and learning problems that go well beyond the ``business as usual'' framework of empirical risk minimization.

The literature on learning in games has a long history and asks what type of equilibrium behavior (if any) may arise in the long run of a process of learning and adaptation, in which agents are trying to (myopically) maximize their payoff while adapting to the actions of other agents through repeated interactions \citep{FL98, HMC03}.
To that end, many learning algorithms have been developed ranging from iterative best-response to first-order online optimization algorithms in which agents follow their utility gradient in each step \citep{MZ19, soda2022}.

It is known that the dynamics of learning agents do not always converge to a Nash equilibrium (NE) \citep{daskalakis2010learning,FVGL+20}:
they may cycle, diverge, or be chaotic, even in zero-sum games, where computing Nash equilibria is tractable \citep{MPP18,PPP17}.
In fact, \citep{HMC03} showed that in general uncoupled dynamics do not converge to Nash equilibrium in all games. With this negative result at hand, there lacks a comprehensive characterization of games that are ``learnable''; however our understanding is much better for certain classes of games.

On the one hand, no-regret algorithms and dynamics do not converge in games with only mixed Nash equilibria \citep{FVGL+20,GVM21}.
On the other hand, if the underlying game is an exact potential game, then no-regret and other learning algorithms converge to an $\varepsilon$-NE with minimal exploration \citep{heliou2017learning,MHC24}.
Importantly, even though many classes of games of interest are not exact potential games~\citep{candogan2013dynamics,candogan2013near}, experimental evidence shows that even if games are not exact potential games, learning dynamics often converge to Nash equilibrium.
Beyond exact potential games there is no good understanding of the behavior of learning dynamics.



Partially motivated by this, \citet{candogan2011flows} introduced a game decomposition that allows one to characterize how ``close'' a game is to being potential  by resolving it into a potential and a harmonic component (plus a ``non-strategic'' part which does not affect the game's equilibrium structure and unilateral payoff differences).
In contrast to potential games, the exponential weight\,/\,replicator dynamics \textendash\ perhaps the most widely studied no-regret dynamics \textendash\ does not converge in any harmonic game, and instead exhibit a quasi-periodic behavior known as Poincaré recurrence \citep{LMP24}.
Based on this dynamical dichotomy between potential and harmonic games, we consider a measure of \textit{potentialness} and analyze generic normal-form games, as well as several classes of games steming in economic applications.
We show that potentialness provides a useful indicator for both the existence of pure Nash equilibria and the convergence of no-regret algorithms.
While the latter was already part of the motivation of \cite{candogan2013dynamics} the former is a novel connection emerging from our empirical exploration.
In particular, we find that the average potentialness in random games decreases and concentrates on a value with increasing numbers of agents or actions:
Games with a potentialness below $0.4$ rarely exhibit convergence, while games with values larger than $0.6$ mostly do.
We also categorize specific games, such as Jordan's matching pennies game \citep{Jor93}, where learning dynamics are known not to converge (or, more precisely, to converge to a non-terminating cycle of best responses).

Economically motivated games such as auctions and contests have more structure in the payoffs.
The analysis of these games is relevant today because pricing and bidding are increasingly being automated via learning agents.
Learning agents are used to bid in display ad auctions, but they are also used by automated agents that set prices on online platforms such as Amazon \citep{chen2016empirical}.
Whether we can expect the dynamics of such multi-agent interactions to converge to an equilibrium or exhibit inefficient price cycles or even chaos, is an economically important question.
We find that the potentialness is very low for all-pay auctions and much higher for Tullock contests, first- and second-price auctions.
Indeed, our experiments show that learning algorithms do not converge for all-pay auctions, but they do so for the other economic games.
The low potentialness of the all-pay auction also connects to the fact that it does not possess a pure Nash equilibrium.

In summary, we use the decomposition of \citet{candogan2011flows} to define a metric that helps us characterize whether a game is learnable under standard no-regret learning dynamics, and to what extent. This is in stark contrast to a brute-force approach, where different initial conditions need to be explored.
Our analysis of economic games is motivated by the observation that first-order no-regret dynamics - such as the replicator dynamics in particular - appear to converge in a wide array of classes of games with important applications, such as auctions and contests. Known sufficient conditions for convergence such as monotonicity or variational stability \citep{MZ19} are not even satisfied in simple economic games such as the first-price auction \citep{bichler2023convergence}. However, the notion of potentialness provides a crisp characterization of these differences as it does in simple normal-form games such as Matching Pennies and the Prisonner's Dilemma. Overall, potentialness provides us with a means to characterize convergence in the many games where known sufficient conditions appear to be too strong. This can help us analyze algorithmic markets where learning algorithms are increasingly used to submit bids or set prices \citep{bichler2024revenuefirstsecondpricedisplay}.

\section{Related Literature}
\label{sec:related}
In this paper, we focus throughout on repeated normal-form games, where
players move simultaneously and receive the payoffs as specified by the combination of actions played. 
The theory of learning in games examines what kind of equilibrium arises as a consequence of a process of learning and adaptation, in which agents are trying to maximize their payoff while learning about the actions of other agents in repeated games \citep{FL98}. For example, fictitious play is a natural method by which agents iteratively search for a pure Nash equilibrium (NE) and play a best response to the empirical frequency of play of other players \citep{brownIterativeSolutionGames1951}. Several algorithms have been proposed based on best or better response dynamics for finite and simultaneous-move games, ultimately leading to a vast corpus of literature \citep{abreu1988structure, hart2000simple, FL98, HMC03, young2004strategic}, while more recent contributions draw on first-order online optimization methods such as online gradient descent or online mirror descent  to study the question of convergence \citep{MZ19, soda2022}. 

Learning dynamics do not always converge to equilibrium \citep{daskalakis2010learning, FVGL+20}. Learning algorithms can cycle, diverge, or be chaotic; even in zero-sum games, where the NE is tractable \citep{MPP18,PPP17}. \citet{sanders2018prevalence} argues that chaos is typical behavior for more general matrix games. 
Recent results have shown that learning dynamics do not converge in games with mixed Nash equilibria \citep{GVM21,GVM21b}.
On the positive side, \citet{MZ19} showed conditions for which no-external-regret learning algorithms result in a NE in finite games if they converge. 
However, in general, the dynamics of matrix games can be arbitrarily complex and hard to characterize \citep{andrade2021learning}. 

While there is no comprehensive characterization of games that are ``learnable'' and one cannot expect that uncoupled dynamics lead to NE in all games \citep{HMC03}, there are some important results regarding the broad class of no-regret learning algorithms. 
One can distinguish between internal (or conditional) regret and a weaker version, called `external (or unconditional) regret'. External regret compares the performance of an algorithm to the best single action in retrospect; internal regret, on the other hand, allows one to modify the online action sequence by changing every occurrence of a given action by an alternative action. 
For learning rules that satisfy the stronger no-internal regret condition, the empirical frequency of play converges to the game's set of correlated equilibria \citep{foster1997calibrated,hart2000simple}.
The set of correlated equilibria (CE) is a non-empty, convex polytope that contains the convex hull of the game's Nash equilibria. The coordination in CE can be implicit via the history of play \citep{foster1997calibrated}. On the other hand, algorithms that are no-external-regret learners converge by definition to the set of coarse correlated equilibria (CCE) in finite games \citep{foster1997calibrated, hart2000simple}. 
This set, in turn, contains the set of CE such that we get  $\text{NE} \subset \text{CE} \subset \text{CCE}$. In contrast to correlated equilibria, in a coarse correlated equilibrium, every player could be playing a strictly dominated strategy for all time  \citep{viossat2013no}, which makes CCE a fairly weak, non-rationalizable solution concept.

An important class of games in which a variety of learning algorithms converge to NE is that of \textit{potential games}~\citep{monderer1996potential}. In these games, the difference in any player's utility under a change in strategy is captured by the variation of a global potential function; as a consequence, any sequence of improvements by players converges to a pure NE~\citep{heliou2017learning, christodoulou2012convergence, anagnostides2022last}. Remarkably, the class of congestion games is equivalent to that of  potential games \citep{rosenthalClassGamesPossessing1973, monderer1996potential}; however, while many games are not potential games, no-regret algorithms still converge to NE, raising the following question: \textit{what fundamental property enables this convergence?}

Partially motivated by this question, \citet{candogan2011flows}
propose a decomposition of finite normal form games into three components, with distinctive strategical properties. This decomposition can be leveraged to approximate a given game with a potential game, which in turn can be used to characterize the long-term behavior of the dynamics in the original game~\citep{candogan2013dynamics, candogan2013near}.
In particular, \citet{candogan2013dynamics} examine the convergence of best-response and logit-best-response dynamics \textendash\ in the sense of \citet{BK99a} \textendash\ and they show that, if only one player updates per turn, the dynamics remain convergent in slight perturbations of potential games.%
\footnote{Importantly, the dynamics considered by \citet{candogan2013dynamics} are not the simultaneous best-reply dynamics of \citet{GM91} or the logit dynamics of \citet{FL99,HS09}, but rather turn-by-turn updates where each player observes the play of their opponents and plays a (logit) best-response.}

An important caveat is that the class of dynamics considered by \citet{candogan2013dynamics} can result in positive regret. Moreover, unlike the setting studied here, players do not move simultaneously but sequentially.
Our focus is broader, as we aim to understand the behavior of no-regret dynamics across the full spectrum of potentialness \textendash{} not just in near-potential games \textendash{} and to determine where the convergence of regularized no-regret learning methods fails. In doing so, we also provide a first positive answer to the open question posed by \citet{candogan2013dynamics}, who asked whether the replicator/follow-the-regularized-leader dynamics, a staple among no-regret learning schemes, remains convergent under small perturbations of potential games.

\section{Preliminaries}
\label{sec:prelim}
In this section, we recall the definitions of
normal-form game, Nash equilibrium, and potential game.

\begin{definition}[Normal-form game\index{normal-form games}]\label{def:normalform}
A finite normal-form game
consists of a tuple $\gamefull$, where
\begin{itemize}
\item $\players$ is a finite set of $\nplayers$ \textup{players}, indexed by $\play$;
\item $\pures= \pures_1 \times \cdots \times \pures_\nplayers$, where $\pures_\play$ is a finite set of $\npures_{\play}$ \textup{actions} available to player $\play$;
\item $\pay = (\pay_1, \dots , \pay_\nplayers),$ where $\pay_\play:  \pures \mapsto \R$ is a \textup{payoff} or \textup{utility} function for player $\play \in \players$.
\end{itemize}
 An element $\pure = (\pure_1, \ldots , \pure_\nplayers) \in \pures$  is referred to as an \textup{action profile}; we denote by $\npures =\prod_i \npures_i $ the total number thereof.
\end{definition}
A normal-form game
is hence fully specified by the prescription of the strategies available to each player,
and of the outcomes resulting from a simultaneous and strategic interaction.

Rather than explicitly selecting an action \textendash{} also called a \textit{pure strategy} \textendash{} from the available options, players may instead \textit{mix}, meaning they randomize over their available actions according to a probability distribution, referred to as a \textit{mixed strategy}.
\begin{definition}[Mixed strategy\index{mixed strategy}]
Given the normal-form game $\gamefull$, the set of \textup{mixed strategies} for player $\play$ is $S_\play= \Delta(\pures_\play)$,
the set of probability distributions (or lotteries) over $\pures_\play$. An element $s\in S$ is called \textup{strategy profile}.
\end{definition}
A Nash equilibrium is
a strategy profile such that no player has an incentive to unilaterally deviate from their chosen strategy:
\begin{definition}[Nash equilibrium]
In a normal-form game $\gamefull$, a strategy profile $s^* = (s_1^*, s_2^*, \ldots, s_\nplayers^*)\in S_1\times\ldots\times S_N$ is a \textup{Nash equilibrium} if
\begin{equation}
\label{eq:def-NE}
\pay_\play(s_\play^*, s_{\others}^*) \geq \pay_\play(s_\play, s_{\others}^*)
\quad \text{for all } \play\in\players, \, 
s_\play \in S_\play,
\end{equation}
where $s_{\others}^* = (s_1^*, s_2^*, \ldots, s_{\play-1}^*, s_{\play+1}^*, \ldots, s_\nplayers^*)$.
\end{definition}
A \textit{potential game} is a game in which strategic interactions are fully captured by single a scalar function, aligning individual players’ incentives with its maximization:
\begin{definition}[Potential game]
A game $\gamefull$ is a called \textup{potential} if there exists a function $\pot \from \pures \to \R$ such that, for every player $\play \in \players$ and every pair of action profiles $\pure, \purealt \in \pures$ that differ only in the action of player $\play$ (that is, $\pure_\play \neq \purealt_\play$ and $\pure_{\others} = \purealt_{\others}$), the following condition holds:
\begin{equation}
\pay_\play(\pure) - \pay_\play(\purealt) = \pot(\pure) - \pot(\purealt).
\end{equation}
When this is the case, $\pot$ is called \textup{potential function} of the game.
\end{definition}

\section{Potentialness of a Game}
\label{sec:potentialness}
\nocite{jiangStatisticalRankingCombinatorial2011} 
\nocite{friedmanComputingBettiNumbers1996} 
\nocite{munkresElementsAlgebraicTopology1984} 
\nocite{dodziukFiniteDifferenceApproachHodge1976} 
\nocite{eckmannHarmonischeFunktionenUnd1944} 
In this section we provide an overview of a combinatorial decomposition technique for finite games in normal form, originally introduced by \citet{candogan2011flows}. Subsequently, we leverage such decomposition to define the \define{potentialness} of a game, a measure of its closeness to being a potential game. This measure, closely related to the \define{maximum pairwise difference} introduced by \cite{candogan2013dynamics}, can be used as a predictor for the existence of strict pure Nash-equilibria (SPNE) in a game, and of the limiting behavior of learning dynamics thereof.
\paragraph{Deviation map}
Given a finite game in normal form $\gamefull$, pairs of strategy profiles $\dev$ that differ only in the strategy of one player are called \define{unilateral deviations}, and their space is denoted by $\devs$. Representing a game in terms of \textit{utility differences between unilateral deviations} rather than in terms of utilities themselves captures the strategic structure of a game in an effective way, in the sense that games with different utilities but identical utility differences between unilateral deviations share the same set of Nash equilibria \cite{candogan2011flows}.

To achieve this representation of the game $\gamefull$, consider its \define{response graph} $\rgraph$, that is the graph with a node for each of the $\npures$ pure strategy profile in $\A$, and an edge for each of the $\ndevs \defeq \abs{\devs} = \frac{\npures}{2}\sum_{i \in \N}(\npures_{\play}-1)$ unilateral deviations in $\devs$ \citep{biggarGraphStructureTwoplayer2023}.

This graph is an instance of a \define{simplicial complex} $K$, that is, loosely speaking, a collection of oriented $k$-dimensional faces (points, segments, triangles, tetrahedrons, ...) with $k \in \{0, 1, \dots\}$~\citep{jonssonSimplicialComplexesGraphs2007}. Given a simplicial complex $K$ one can build a family of vector spaces $\{C_k\}_{k = 0, 1, \dots}$, called \define{chain groups}, where each chain group $C_k$ is the space spanned by the $k$-dimensional faces of the complex $K$, i.e., the space of assignments of a real number to each $k$-dimensional face of the complex~\citep{munkresElementsAlgebraicTopology1984}.


In this work, we restrict our attention to the chain groups $\pots$ and $\flows$ on the response graph $\rgraph$ of a game. $\pots \cong \R^{\npures}$ is the space of assignments of a real number to each vertex $\pure \in \pures$, and $\flows \cong \R^{\ndevs}$ is the space of assignments of a real number to each edge $\dev \in \devs$; an element in $\flows$ is called \define{flow} on the graph.

Note that the potential function $\pot\from\pures\to\R$ of a potential game is precisely an assignment of a number to each pure strategy profile $\pure \in \pures$: as such, it is an element of $\pots$. In a similar way, observe that the collection of utility functions $\pay = (\pay_\play)_{\play\in\players}\from\pures\to\R^\nplayers$ is the assignment of a number $\pay_{\play}(\pure) \in \R$ to each pure strategy profile $\pure \in \pures$ for each player $\play \in \players$; as such, it can be considered as an element of ``$\nplayers$ copies of $\pots$'', that is $\pay \in \pays \defeq \pots \times \dots \times \pots$.

The key observation is that the differences between unilateral deviations of a game $\gamefull$ can be represented as a special flow $\dflow$ in $\flows$ on $\rgraph$, called \define{deviation flow of the game}, by means of the \define{deviation map}:
\begin{definition}[Deviation map]
The \textup{deviation map} of the game $\gamefull$ is the linear map
\begin{equation}
\label{eq:deviation-map}
\map{\dmap}{\pays}{\flows}{\pay}{\dflow}
\end{equation}
such that, for all $\pay \in \pays$ and all $\dev \in \devs$
\begin{equation}
(\dflow)_{\dev} = \pay_{\play}(\purealt) - \pay_{\play}(\pure) 
\end{equation}
for the (necessarily existing and unique) $\play \in \players$ such that $\pure_{\play} \neq \purealt_{\play}$.
\end{definition}
In words,  the deviation flow $\dflow \in \flows$ assigns to each edge $\dev \in \devs$ of the response graph, i.e., to every unilateral deviation of the game, the utility difference of the deviating player $\play \in \players$.

The image of the deviation map, $\feasflows \subset \flows$, defines the space of \define{feasible flows} on a game's response graph. Among these, two types are particularly relevant: \textit{potential} flows and \textit{harmonic} flows.
\paragraph{Potential flows}
For a given number of players and actions per player, representing a game via its deviation flow $\dflow$ rather then its payoff $\pay$ preserves all the strategic information of the game, allowing for a concise characterization of potential games. To introduce it, we need the following definition:
\begin{definition}[Gradient map] The \textup{gradient map}\footnote{The gradient map is an instance of so-called \define{co-boundary maps}; see \citet{munkresElementsAlgebraicTopology1984} for details.} is the linear map
    \begin{equation}
        \map{\grad}{\pots}{\flows}{\pot}{\grad\pot}
    \end{equation} such that
    \begin{equation}
        \ll\grad\pot\rr_{\dev} = \pot(\purealt) - \pot(\pure)
    \end{equation}
    for all $\pot \in \pots$ and all $\dev \in \devs$.
\end{definition}
It is now immediate to show that
\begin{proposition}
\label{prop:potential-games}
A game $\gamefull$ is potential with potential function $\pot$ if and only if $\dflow = \grad\pot$ for some $\pot \in \pots$.
\end{proposition}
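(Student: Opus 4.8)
The plan is to prove the equivalence by directly unwinding the definitions of the deviation map $\dmap$ and the gradient map $\grad$, which reduces the potential-game condition to an edge-by-edge comparison of the two flows $\dflow$ and $\grad\pot$ in $\flows$.

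First I would note that, since $\flows \cong \R^{\ndevs}$, two flows coincide if and only if they assign the same number to every edge of the response graph $\rgraph$; hence the identity $\dflow = \grad\pot$ is equivalent to the family of scalar equations $(\dflow)_{\dev} = (\grad\pot)_{\dev}$ indexed by the unilateral deviations $\dev \in \devs$.

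Next, I would fix an arbitrary deviation $\dev \in \devs$ and let $\play \in \players$ be the unique player with $\pure_{\play} \neq \purealt_{\play}$ and $\pure_{\others} = \purealt_{\others}$, whose existence and uniqueness is part of the definition of $\devs$. By the definition of the deviation map we have $(\dflow)_{\dev} = \pay_{\play}(\purealt) - \pay_{\play}(\pure)$, and by the definition of the gradient map $(\grad\pot)_{\dev} = \pot(\purealt) - \pot(\pure)$. Thus $(\dflow)_{\dev} = (\grad\pot)_{\dev}$ is equivalent to $\pay_{\play}(\purealt) - \pay_{\play}(\pure) = \pot(\purealt) - \pot(\pure)$, and multiplying through by $-1$ this is precisely the defining potential-game relation $\pay_{\play}(\pure) - \pay_{\play}(\purealt) = \pot(\pure) - \pot(\purealt)$ at the pair $\pure, \purealt$.

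Finally I would assemble the two implications. If $\game$ is potential with potential $\pot$, the relation above holds for every player and every pair of profiles differing only in that player's action, i.e.\ for every $\dev \in \devs$, so $\dflow = \grad\pot$; conversely, if $\dflow = \grad\pot$ for some $\pot \in \pots$, the same relation holds for every $\dev \in \devs$, which says exactly that $\pot$ is a potential function of $\game$. The only delicate point is bookkeeping: the definitions of $\dmap$ and $\grad$ must use the same orientation on each edge $\dev \in \devs$, so that the two flows are compared consistently; once this convention is fixed the argument is a routine dictionary translation between the payoff-difference language and the chain-complex language, and I do not expect any genuine obstacle --- the content of the proposition lies in the definitions it packages rather than in the verification.
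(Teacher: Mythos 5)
Your proof is correct and follows essentially the same route as the paper's: fix an arbitrary deviation, unwind the definitions of $\dmap$ and $\grad$ to reduce the identity $\dflow = \grad\pot$ to the edgewise payoff-difference equation, and read off both implications. Your remark about keeping a consistent edge orientation is a fair bookkeeping note but does not change the argument.
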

\begin{proof}
Let $\gamefull$ be a potential game with potential function $\pot$. Then
\[
\dflow_{\dev} = \pay_\play(\purealt) - \pay_\play(\pure) = \pot(\purealt) - \pot(\pure) = (\grad\pot)_{\dev}
\]
for all $\dev \in \devs$, where $\play \in \players$ is the actor of the deviation $\dev$. Thus, $\dflow = \grad\pot$.
Conversely, let $\gamefull$ be a game with $\dflow = \grad\pot$ for some $\pot \in \pots$. Then
\[
\pay_\play(\purealt) - \pay_\play(\pure) = \dflow_{\dev} = (\grad\pot)_{\dev} = \pot(\purealt) - \pot(\pure) 
\]
for all $\play \in \players$ and all $\dev \in \devs$ acted by $\play$. Thus, the game is potential with potential function $\pot$.
\end{proof}
The proposition means that the space of potential games is the linear subspace $\pgames \subset \games$; in light of this we call the image of the gradient map, $\pflows \subset \flows$, the space of \define{potential flows}.
\paragraph{Harmonic flows}
Endowing $\pots$ and $\flows$ with an Euclidean-like inner product $\inner{\cdot}{\cdot}_k$ one can define the \define{divergence map} $\imap{\div \defeq \adj{\grad}}{\flows}{\pots}$ as the adjoint operator of the gradient map, namely $\inner{\flow}{\grad\pot}_1 = \inner{\div\flow}{\pot}_0$ for all $\flow \in \flows$ and all $\pot \in \pots$. The flows in the subspace $\hflows \in \flows$ are called \define{harmonic flows}\footnote{The term \textit{harmonic} refers in combinatorial Hodge theory \cite{dodziukFiniteDifferenceApproachHodge1976} to the kernel of the \textit{Laplacian operator} $\ker{\Delta_1} = \ker\div \cap \ker{\up_1}$, where $\up_1$ is defined analogously to $\up_0$. As \citet{candogan2011flows} show, each feasible flow belongs to $\ker{d_1}$, making these two definitions of harmonic flows consistent.}, and the games whose flow is harmonic, i.e., the games in $\hgames \subset \games$, are called \define{harmonic games}.

\paragraph{Hodge decomposition of
feasible flows} Leveraging the \textit{combinatorial Hodge decomposition theorem}\footnote{See \citet{jiangStatisticalRankingCombinatorial2011} for a concise presentation and proof.} it can be shown that potential flows and harmonic flows completely characterize feasible flows:
\begin{theorem*}[\citet{candogan2011flows} --- Combinatorial Hodge decomposition of feasible flows]
\label{th:flows-decomposition}
The space of feasible flows is the orthogonal direct sum of the subspaces of potential flows and harmonic flows:
\begin{equation}
\label{eq:flows-decomposition}
    \feasflows = \pflows \oplus \hflows
\end{equation}
\end{theorem*}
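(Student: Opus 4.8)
The plan is to deduce the decomposition of $\feasflows$ from the classical combinatorial Hodge decomposition of the full chain group $\flows$, together with the structural fact --- due to \citet{candogan2011flows} and recalled in the footnote above --- that every feasible flow lies in $\ker d_1$, where $d_1\from\flows\to C_2$ is the co-boundary map on the response graph (viewed as the $1$-skeleton of a suitable simplicial complex built by filling in triangles corresponding to pairs of players jointly deviating).

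First I would write down the orthogonal Hodge decomposition of the ambient space of all flows,
\begin{equation}
\label{eq:ambient-hodge}
\flows = \im\grad \,\oplus\, \ker\Delta_1 \,\oplus\, \im\adj{d_1},
\end{equation}
where $\Delta_1 = \grad\div + \adj{d_1}d_1$ is the $1$-Laplacian; this is the statement cited from \citet{jiangStatisticalRankingCombinatorial2011}, and the three summands are mutually orthogonal with respect to $\inner{\cdot}{\cdot}_1$. By definition $\pflows = \im\grad$ and $\hflows = \ker\Delta_1 = \ker\div\cap\ker d_1$ (the footnote records why, on feasible flows, $\ker\div$ and $\ker\Delta_1$ coincide). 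So \eqref{eq:ambient-hodge} reads $\flows = \pflows \oplus \hflows \oplus \im\adj{d_1}$, and it remains to show that intersecting with $\feasflows$ kills the last summand and leaves the first two untouched.

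Next I would verify the two inclusions $\pflows \subseteq \feasflows$ and $\hflows \subseteq \feasflows$. For the potential part this is essentially \Cref{prop:potential-games}: any $\grad\pot$ is the deviation flow of the potential game whose payoffs equal $\pot$ for every player, hence lies in $\im\dmap = \feasflows$. For the harmonic part one invokes the construction of \citet{candogan2011flows}: given a flow $\flow\in\ker\Delta_1$, one exhibits a game whose deviation flow is $\flow$ (the relevant computation shows that a flow lies in $\feasflows$ iff it is ``consistent'' in the sense of summing to zero around the $2$-faces, i.e. iff $d_1\flow = 0$); since $\ker\Delta_1\subseteq\ker d_1$, every harmonic flow is feasible. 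Combined with the key structural fact $\feasflows\subseteq\ker d_1$, we get
\begin{equation}
\feasflows \subseteq \ker d_1 = \im\grad \oplus \ker\Delta_1 = \pflows \oplus \hflows \subseteq \feasflows,
\end{equation}
where the middle equality is the restriction of \eqref{eq:ambient-hodge} to $\ker d_1$ (the summand $\im\adj{d_1}$ being orthogonal to $\ker d_1$, hence disjoint from it except at $0$). This forces equality throughout, and the sum $\pflows\oplus\hflows$ is orthogonal because it already was inside \eqref{eq:ambient-hodge}.

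The main obstacle is the step $\feasflows\subseteq\ker d_1$ together with the converse feasibility of every element of $\ker d_1$ --- i.e. pinning down exactly which flows arise as deviation flows of games. This is where the specific combinatorial geometry of the response graph enters: one must identify the $2$-dimensional faces with the ``squares'' formed by two players each switching between two actions, check that the deviation-flow condition is precisely the vanishing of the circulation around each such face, and confirm the dimension count matches. I would treat this identification as the content imported from \citet{candogan2011flows} and cite it, rather than re-deriving it; everything else is a formal consequence of orthogonal-decomposition bookkeeping.
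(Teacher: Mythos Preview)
The paper does not actually prove this theorem: it is stated with attribution to \citet{candogan2011flows}, and the surrounding text explicitly defers to \citet{jiangStatisticalRankingCombinatorial2011} for the underlying Hodge machinery. So there is no ``paper's own proof'' to compare your proposal against; your sketch is a reconstruction of the argument from the cited sources, and its overall architecture --- ambient Hodge decomposition of $\flows$, then the identification $\feasflows = \ker d_1$, then orthogonal bookkeeping --- is the correct one and is exactly what \citet{candogan2011flows} do.

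One point to correct: your description of the $2$-faces is off. You speak of ``triangles corresponding to pairs of players jointly deviating'' and later of ``squares formed by two players each switching between two actions,'' but in the complex of \citet{candogan2011flows} the $2$-simplices are triangles spanned by triples of \emph{pairwise comparable} profiles, and two profiles are comparable only when they differ in a single player's action. If $\pure$ and $\pure'$ differ in player~$1$ while $\pure'$ and $\pure''$ differ in player~$2$, then $\pure$ and $\pure''$ differ in both and are not adjacent; hence every $2$-face is a single-player triangle (three actions of one player with $\pure_{\others}$ fixed), and the ``two-player squares'' are not simplices at all. The identity $\feasflows = \ker d_1$ then unpacks cleanly: a flow is a deviation flow iff, for each player $\play$ and each fixed $\pure_{\others}$, its restriction to the complete graph on $\pures_\play$ is a gradient --- which is exactly the vanishing of circulation around every such triangle. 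With that correction your outline goes through.
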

Equivalently, every feasible $\dflow$ flow can be decomposed in a unique way as
$\dflow = \pflow + \hflow$, where the potential flow $\pflow \in \pflows$ is the orthogonal projection of $\dflow$ onto $\pflows$,  and the harmonic flow $\hflow \in \hflows$ is the orthogonal projection of $\dflow$ onto $\hflows$.

The decomposition \eqref{eq:flows-decomposition} takes place in the space $\feasflows \subset \flows$  of feasible flows.
In \cref{app:games-decomposition}, 
we discuss how to derive a corresponding decomposition in the space $\games$ of actual payoffs, allowing for the unique decomposition of any game’s payoff $\pay \in \pays$ as $ \pay = \ppay + \hpay + \npay $, where $\ppay$ is a \define{normalized potential} game, $\hpay$ a \define{normalized harmonic} game, and $\npay$ a \define{non-strategic} game; c.f.~\cref{fig:ex_shapley} for an example, and \citet{candogan2011flows} for further details. However, the decomposition \eqref{eq:flows-decomposition} in the space of feasible flows suffices to introduce the measure of potentialness used in this work, as the deviation flow of non-strategic games is identically zero (see \cref{app:games-decomposition}).
\begin{figure}[h]
	\begin{center}
		\centerline{\includegraphics[width=0.6\columnwidth]{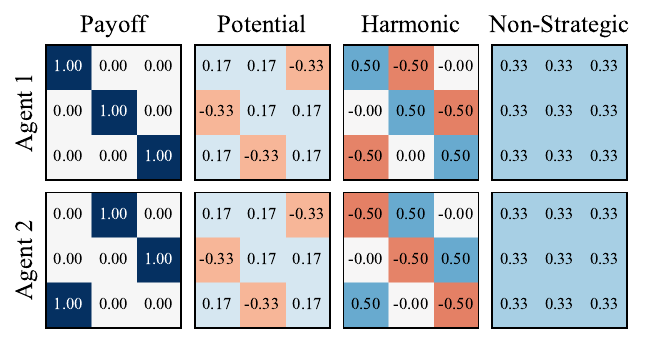}}
		\caption{Decomposition of the Shapley game.}
		\label{fig:ex_shapley}
	\end{center}
\end{figure}

Endowing $\games$ with an inner product structure, \citet{candogan2011flows} show that $\ppay$ is
the potential game \textit{closest} to $\pay$. This allows \citet{candogan2013dynamics, candogan2013near} to characterize the limiting behavior of dynamics in the game $\pay$ in terms of the properties of the potential game $\ppay$ and of the \define{distance} between $\pay$ and $\ppay$, a concept that is made precise in the next paragraph.
\paragraph{Potentialness}
The potential component $\pflow$ of the deviation flow $\dflow$ of a game can be used to build a measure of how close to being a potential game the game is. To compute it, \citet{candogan2011flows} introduce the orthogonal projection onto the subspace of potential flows; by the properties of the Moore-Penrose pseudo-inverse $\pinv\grad\from\flows\to\pots$ of the gradient map~\citep{golanFoundationsLinearAlgebra1992} such projection is $\pproj \defeq \grad\pinv\grad \from \flows \to \flows$, so that
\begin{equation}
\label{eq:exact-projection}
    \pflow = \pproj \dflow \in \pflows \subset \flows
\end{equation}

\citet{candogan2013dynamics, candogan2013near} use the deviation map to define the \textit{maximum pairwise difference} between two games as $\dist(\pay,\alt\pay) = \norm{\dflow - \alt\dflow}$.\footnote{In this work we use the $2$-norm, whereas \citet{candogan2013dynamics} use the infinity norm.} In particular, since the potential component $\ppay$ of a game $\pay$ is (up to a non-strategic game) the potential game closest to the original game, they use the maximum pairwise difference $\dist(\pay,\ppay) = \norm{\dflow - \pflow} = \norm{\hflow}$ between a game $\pay$ and its potential component $\ppay$ as a measure of closeness to being a potential game for the game $\pay$.
In this spirit we give the following definition:
\begin{definition}[Potentialness]
    The \define{potentialness of a game $\gamefull$} is the real number
    \begin{equation}
    \label{eq:potentialness}
        \P(\pay) \defeq \frac{\norm\pflow}{\norm\pflow + \norm\hflow}
    \end{equation}
\end{definition}
\begin{proposition}
\label{prop:potentialness}
The potentialness of a game fulfills
\begin{enumerate}[(i)]
\item $\P(\pay) \in [0,1]$
    \item $\P(\pay)  = 1 \iff \dist(\pay, \ppay)  = 0  \iff \pay$ is a potential game
    \item $\P(\pay)  = 0 \iff \pay \text{ is a harmonic game}$
\end{enumerate}
\end{proposition}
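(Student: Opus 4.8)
The plan is to reduce all three items to two facts already at our disposal: the combinatorial Hodge decomposition theorem of \citet{candogan2011flows} recalled above, which writes the (feasible) deviation flow as $\dflow = \pflow + \hflow$ with $\pflow \in \pflows$ and $\hflow \in \hflows$ the mutually \emph{orthogonal} projections of $\dflow$ onto these subspaces, and \cref{prop:potential-games}, which states that $\pay$ is a potential game if and only if $\dflow \in \pflows$. To these I add only the elementary fact that a Euclidean norm is nonnegative and vanishes precisely at the zero vector, together with the identity $\dist(\pay,\ppay) = \norm{\dflow - \pflow} = \norm{\hflow}$ recorded just before \eqref{eq:potentialness}. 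Throughout I would assume $\dflow \neq 0$ — equivalently, that $\pay$ is not a non-strategic game — since this is precisely the regime in which the right-hand side of \eqref{eq:potentialness} is well defined, and I would flag this caveat explicitly. Granting it, item (i) is immediate: $\norm{\pflow} \geq 0$ and $\norm{\hflow} \geq 0$, so the denominator $\norm{\pflow} + \norm{\hflow}$ is strictly positive (it vanishes only when both summands do, i.e.\ when $\dflow = 0$), while the numerator $\norm{\pflow}$ is nonnegative and bounded above by that denominator; hence $0 \leq \P(\pay) \leq 1$.

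For (ii), I would chain equivalences: $\P(\pay) = 1$ iff $\norm{\hflow} = 0$ iff $\hflow = 0$ iff $\dflow = \pflow \in \pflows$, and by \cref{prop:potential-games} this last condition holds iff $\pay$ is potential; the ``only if'' direction ($\pay$ potential $\Rightarrow \hflow = 0$) uses that $\hflow$ is the \emph{orthogonal} projection onto $\hflows$, so a flow already lying in $\pflows$ has vanishing $\hflows$-component. The middle equivalence with $\dist(\pay,\ppay) = 0$ then drops out of $\dist(\pay,\ppay) = \norm{\hflow}$. Item (iii) is symmetric: $\P(\pay) = 0$ iff $\norm{\pflow} = 0$ iff $\pflow = 0$ iff $\dflow = \hflow \in \hflows$, which by definition is exactly the statement that $\pay$ is a harmonic game — again invoking orthogonality of the decomposition for the direction ``harmonic $\Rightarrow \pflow = 0$''.

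I do not expect a genuine obstacle here: the whole content is carried by the orthogonal Hodge decomposition and \cref{prop:potential-games}, and what remains is bookkeeping. The two points deserving a little care are (a) the well-definedness issue in (i), namely isolating the case $\dflow = 0$ so that the denominator is never zero, and (b) remembering to use \emph{orthogonality} of the decomposition (not merely its direct-sum structure) when establishing the ``only if'' halves of the equivalences in (ii) and (iii); both are routine once stated.
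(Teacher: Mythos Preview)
Your proposal is correct and follows essentially the same route as the paper's own proof: both reduce everything to the orthogonal Hodge decomposition $\dflow = \pflow + \hflow$, the identity $\dist(\pay,\ppay) = \norm{\hflow}$, and \cref{prop:potential-games}, with the equivalences in (ii) and (iii) established by the same ``$\norm{\cdot}=0 \iff \cdot = 0$'' bookkeeping. If anything, you are slightly more careful than the paper in explicitly flagging the degenerate case $\dflow = 0$ where the ratio in \eqref{eq:potentialness} is undefined.
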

\begin{proof} Consider the game $\gamefull$ and its potentialness
\[
\P(\pay) \defeq \frac{\norm\pflow}{\norm\pflow + \norm\hflow}
\]
\begin{enumerate}[(i)]
    \item It is obvious that $\P(\pay) \in [0,1]$;
    
    \item \label{it:potential} Since $\dist(\pay, \ppay) = \norm{\hflow}$ it is also obvious that $\P(\pay) = 1 \iff \dist(\pay, \ppay) = 0$. Let this be the case, then $\dflow = \pflow + \hflow = \pflow \in \pflows$, so $\dflow = \grad\pot$ for some $\pot \in \pots$, i.e. the game is potential by \cref{prop:potential-games}.

    Conversely, let $\gamefull$ be a potential game. Then $\pflow = \dflow$, since $\pflow$ is the orthogonal projection of $\dflow$ onto $\pflows$, and such projection leaves $\dflow$ invariant since $\dflow = \grad\pot$ itself belongs to $\pflows$. Hence, $\norm{\hflow} = 0$.
    
    \item It is obvious that $\P(\pay) = 0 \iff \norm{\pflow} = 0$; if this is the case then $\dflow = \hflow$, so the game is harmonic by definition.

    Conversely, if the game is harmonic then $\dflow = \hflow$ by an argument analogous to the one in point \eqref{it:potential}, which implies that $\norm{\pflow} = 0$. \qedhere
    \end{enumerate}
\end{proof}

In light of these properties, a game’s potentialness provides a concise measure of its proximity to being a potential game. In the next sections, we examine the computational complexity of computing potentialness, and explore the existence of strict pure Nash equilibria (SPNE) as well as the asymptotic behavior of learning dynamics as a function of potentialness.
\paragraph{Scalability}
To compute the potentialness of a game with payoff functions $\pay \in \pays$, two computationally expensive steps are required: computing the deviation flow, $\pay \mapsto \dmap\pay$, and projecting it onto the space of potential flows, $\dmap\pay \mapsto \pproj\dmap\pay$. Since these operations involve the linear maps $\dmap\from\pays\to\flows$ of \cref{eq:deviation-map} and $\pproj \from \flows \to \flows$ of \cref{eq:exact-projection}, the calculations reduce to matrix-vector multiplications.

Given a game $\gamefull$, the operators $\dmap$ and $\pproj$ depend only on the number of actions $\npures$ and agents $\nplayers$, not on the payoff function $\pay \in \pays$. Thus, the matrices representing these operators need to be computed only once for each combination of $\nplayers$ and $\npures$, allowing them to be precomputed and reused. However, this preprocessing step is costly, as matrix size grows exponentially with the number of agents. The projection $\dmap\pay \mapsto \pproj\dmap\pay$, the most computationally expensive step, involves the matrix $\pproj$ of size $\dim\flows \times \dim\flows$, where
$
\dim\flows = \frac{\npures}{2} \sum_{\play \in \players} (\npures_{\play} - 1).
$
In games where all agents have the same number of actions, i.e., $A_\play = m$ for all $\play \in \players$, the number of entries in $\pproj$ scales as $\mathcal{O}(\nplayers^2 m^{2\nplayers+2})$. For reference, in a game with five actions per agent, the number of entries in $\pproj$ is of order $10^4$ for $\nplayers=2$, $10^5$ for $\nplayers=3$, and $10^7$ for $\nplayers=4$.

For the larger settings considered in \cref{fig:runtime} (three agents with 12 actions each, and four agents with $7$ actions each), computing these matrices takes approximately 2–3 minutes. However, this cost is incurred only once. Once the matrices are available, computing potentialness is very fast, as the dominant cost reduces to a matrix-vector multiplication of complexity $\mathcal{O}(\nplayers^2 m^{2\nplayers+2})$. The average runtime (over 100 runs) for computing potentialness in our experiments, performed on a standard notebook, is shown in \cref{fig:runtime}.
\begin{figure}[h!]
    \begin{center}
        \centerline{\includegraphics[width=0.5\columnwidth]{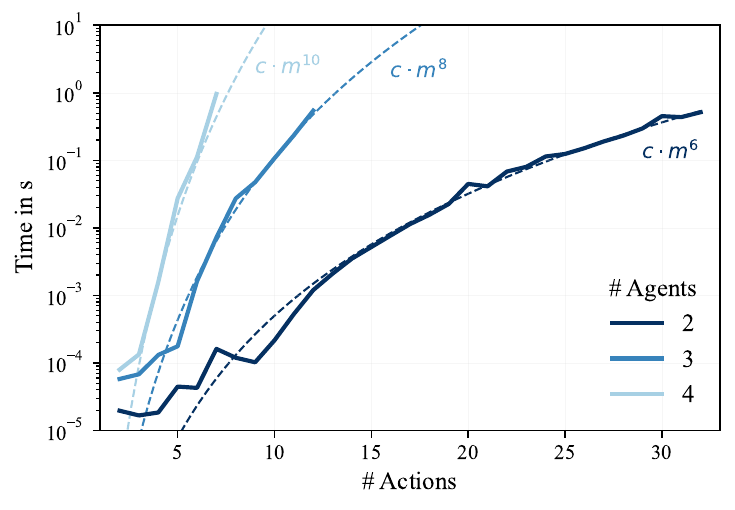}}
        \caption{Average runtime for computing potentialness over 100 runs, with precomputed matrices.}
        \label{fig:runtime}
    \end{center}
    \vskip 0.1in
\end{figure}

\section{Numerical Experiments}
\label{sec:num_exp}
We conduct our numerical experiments on randomly generated and economically motivated games. 
Our analysis focuses mainly on two aspects, namely, the existence of strict pure Nash-equilibria (SPNE) and the convergence of learning dynamics as a function of the potentialness of the games.\footnote{The code is available on Github at \url{https://github.com/MOberlechner/games_learning}.}

To analyze learning dynamics, we consider online mirror descent (OMD) \citep{nemirovskij1983problem} with entropic regularization, leading to the update steps outlined in Algorithm \ref{alg:omd}. OMD is a natural choice among no-regret algorithms due to its favorable regret properties, and belongs to the broader class of follow-the-regularized-leader (FTRL) algorithms \citep{ShalevShwartz2012}.
\begin{center}
\begin{algorithm}[h]
\SetKwInOut{Input}{Input}\SetKwInOut{Output}{Output}
\Input{initial mixed strategies $s_{i,0}$}
\For{$t=1$ {\bfseries to} $T$}{
\For{agent $i=1$ {\bfseries to} $\nplayers$}{
observe gradient $v_{i,t}$\;
set $s_{i,t} \leftarrow \mathcal{P}_{s_{i,t-1}} (\eta_t v_{i,t}) $\;
}
}
\caption{Online Mirror Descent}
\label{alg:omd}
\end{algorithm}
\end{center}
The entropic regularization introduces in Algorithm \ref{alg:omd} a \define{prox-mapping} $ \mathcal{P}_x: \R^d \rightarrow \R^d $ of the form
\begin{equation}
\mathcal{P}x(y) = \dfrac{(x_j\exp(y_j)){j=1}^d}{\sum_{j=1}^d x_j \exp(y_j)},
\end{equation}
where $d \in \mathbb{N}$ denotes the dimension of the player’s strategy space, i.e., $d = A_\play$; see, e.g., \citet{nemirovskiProxMethodRateConvergence2005, juditskySolvingVariationalInequalities2011} for further details.

The convergence properties of the algorithm depend on the chosen step-size $\eta_t$. We use a step-size sequence of the form $\eta_t = \eta_0 \cdot t^{-\beta}$ for some $\eta_0>0$ and $\beta \in (0,1]$, and consider \cref{alg:omd} to have converged to an (approximate) NE of the game if the \define{relative utility loss}, $\ell_i(s_t) =  1 - u_i(s_t)/u_i(br_i, s_{-i,t})$, falls below a predefined tolerance of $\varepsilon=10^{-8}$ for all agents $\play\in\players$ within a fixed number of iterations, $T=2\thinspace000$. 
In the above, $br_i$ denotes a best response of agent $i$ given the opponents’ strategy profile $s_{-i,t}$.
\paragraph{Standard Games}
Before examining randomly generated games, we first briefly analyze the potentialness of standard games. The results are summarized in \cref{tab:matrix_games}.
\begin{table}[h]
	\begin{center}
        \caption{Potentialness of standard matrix games.  \label{tab:matrix_games}}
        \begin{tabular}{lcr}
            \toprule
            Game & Actions & $P(u)$ \\
            \midrule
            Matching Pennies    & 2x2 & 0.00 \\
            Battle of the Sexes & 2x2 & 0.94  \\
            Prisoners' Dilemma    & 2x2& 1.00  \\
            Shapley Game    & 3x3 & 0.36       \\
            Jordan Game $(\alpha, \beta)$& 2x2 & [0.00, 0.50]\\
            \bottomrule
        \end{tabular}
    \end{center}
    \footnotesize
    The payoff matrices for the first three games are from \cite{nisan2007AlgorithmicGameTheory}, while the matrix for the Shapley game is given in \cref{fig:ex_shapley}. The matrix for the Jordan game is taken from \citet[Def 2.1]{Jor93}. OMD converges to a pure NE only in the Battle of the Sexes and the Prisoner's Dilemma, both of which exhibit relatively high values of potentialness.
       
\end{table}

Pure equilibria exist only in the \textit{Battle of the Sexes} and the \textit{Prisoner’s Dilemma}. The Prisoner’s Dilemma, being a potential game, has a potentialness of exactly $1$, whereas the Battle of the Sexes, despite not being a potential game, still exhibits relatively high potentialness. In both cases, OMD converges to a pure NE. 

\textit{Matching Pennies}, the \textit{Shapley Game}, and the \textit{Jordan Game} are known to have only mixed equilibria. Since OMD can converge only to strict NE~\citep{GVM21}, it does not converge in these games. Consistently, Matching Pennies, as a harmonic game, has a potentialness of exactly zero, while the Shapley and Jordan games, being neither harmonic nor potential, exhibit relatively low potentialness.

Interestingly, the potentialness of the Jordan Game, when the parameters of the payoff matrices are sampled uniformly at random, varies between $0$ and $0.5$.
\paragraph{Random Games}
In this section, we investigate three key aspects of potentialness in random games. First, we analyze \textit{how potentialness varies} as a function of the number of agents and available actions. Second, we explore the connection between potentialness and the existence of \textit{strict Nash equilibria}. Third, we examine the relationship between potentialness and the \textit{long-term behavior of learning dynamics} in random games.

To generate a random game $\gamefull$ for a certain \textit{setting} (number of agents $\nplayers$ and actions per agent $\npures_i$), we independently sample each payoff value from a uniform distribution, i.e., $\pay_i(a) \sim \mathcal{U}([0,1])$ for all $a \in \pures$ and $i \in \players$; our dataset consists of $10^6$ games for each considered setting.
\begin{figure}[h!]
    \begin{center}
        \centerline{\includegraphics[width=0.5\columnwidth]{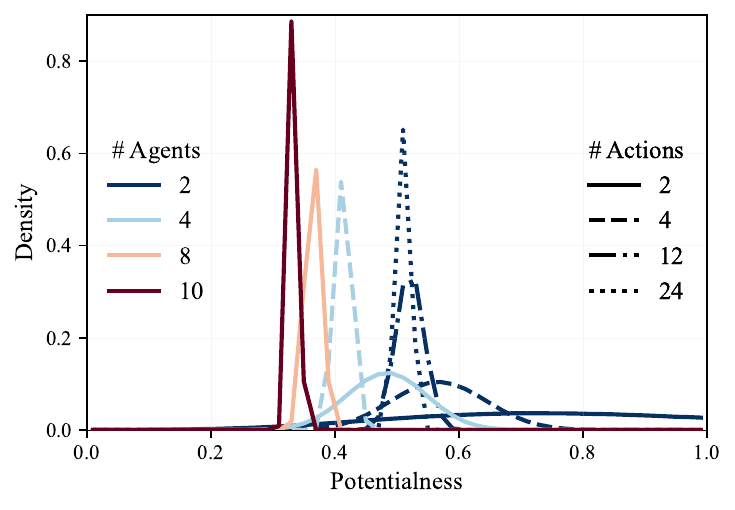}}
        \caption{Empirical distribution of potentialness in $10^6$ randomly generated games for each considered setting.
        Increasing the size of the games reduces the variance and the mean of the distribution.
        }
        \label{fig:distribution_potentialness}
    \end{center}
\end{figure}

The resulting empirical distributions of potentialness for randomly generated games with different
settings
are presented in \cref{fig:distribution_potentialness}; this leads us to the following observation:
\begin{observation}
    As the size of the game increases (in terms of either the number of agents or actions), both the variance and the mean of the observed potentialness distribution decrease.
\end{observation}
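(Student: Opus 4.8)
Since the observation concerns the \emph{empirical} distribution over $10^{6}$ i.i.d.\ samples, I would first prove the corresponding statement for the true law of $\P(\pay)$ when each $\pay_\play(\pure)$ is drawn independently from $\mathcal{U}([0,1])$ (uniform per \emph{setting}, as in the experiments); passing from this to the empirical distribution is then a routine Glivenko--Cantelli step. Write $\norm{\pflow}^{2}=\inner{\pay}{Q_{\mathcal P}\pay}$ and $\norm{\hflow}^{2}=\inner{\pay}{Q_{\mathcal H}\pay}$, where $Q_{\mathcal P}=\adj{\dmap}P_{\mathcal P}\dmap$, $Q_{\mathcal H}=\adj{\dmap}P_{\mathcal H}\dmap$, and $P_{\mathcal P},P_{\mathcal H}$ are the orthogonal projections of $\flows$ onto $\pflows$ and onto the harmonic feasible flows $\hflows\cap\feasflows$; by the combinatorial Hodge decomposition these are orthogonal and $P_{\mathcal P}+P_{\mathcal H}$ projects onto $\feasflows=\im\dmap$, hence $Q_{\mathcal P}+Q_{\mathcal H}=\adj{\dmap}\dmap$. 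Then $\P(\pay)=\big(1+\norm{\hflow}/\norm{\pflow}\big)^{-1}$ is a smooth, monotone function of a ratio of two quadratic forms in the bounded i.i.d.\ vector $\pay$, so it suffices to identify the spectra of $Q_{\mathcal P},Q_{\mathcal H}$, deduce concentration and a limiting mean, and check that this mean decreases.

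\paragraph{Step 1: the spectra of $Q_{\mathcal P}$ and $Q_{\mathcal H}$.}
The key lemma is the spectral identity $\adj{\dmap}\dmap=\bigoplus_{\play\in\players}\big(L(K_{\npures_\play})\otimes I\big)$: the operator $\adj{\dmap}\dmap$ is block-diagonal over players and, in player $\play$'s block, applies the complete-graph Laplacian $L(K_{\npures_\play})$ in player $\play$'s own coordinate (and the identity on the spectator coordinates). Intuitively, the strategically relevant content of $\pay_\play$ is its vector of own-action differences, whose Gram is a per-player Hamming Laplacian; I would prove this by a direct computation with a consistent orientation of $\rgraph$. It follows that the nonzero eigenvalues of $\adj{\dmap}\dmap$ are exactly the action counts $\{\npures_\play\}$, so in the uniform setting $\npures_\play\equiv m$ one has $\dmap\adj{\dmap}=m\cdot(\text{identity on }\feasflows)$; hence $Q_{\mathcal P}$ and $Q_{\mathcal H}$ are $m$ times orthogonal projections, of ranks $d_{\mathrm p}:=\dim\pflows=\npures-1$ and $d_{\mathrm h}:=\dim(\hflows\cap\feasflows)=\dim\feasflows-d_{\mathrm p}$, where $\dim\feasflows=\operatorname{rank}(\adj{\dmap}\dmap)=\sum_{\play}\npures(\npures_\play-1)/\npures_\play$. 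Both projections annihilate $\mathbf 1$ because $\dmap\mathbf 1=0$ (constant games are non-strategic).

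\paragraph{Step 2: concentration and the limiting mean.}
With $\sigma^{2}=1/12$, $\norm{\pflow}^{2}=m\,\norm{\Pi_{\mathrm p}\pay}^{2}$ and $\norm{\hflow}^{2}=m\,\norm{\Pi_{\mathrm h}\pay}^{2}$ for orthogonal projections $\Pi_{\mathrm p},\Pi_{\mathrm h}$ of ranks $d_{\mathrm p},d_{\mathrm h}$ annihilating $\mathbf 1$. Each $\norm{\Pi\pay}^{2}$ therefore equals $\norm{\Pi(\pay-\tfrac12\mathbf 1)}^{2}$, a quadratic form in a centered bounded vector, to which Hanson--Wright applies: it has mean $\sigma^{2}\operatorname{rank}(\Pi)$ and, since $\norm{\Pi}_{F}^{2}=\operatorname{rank}(\Pi)$ for a projection, deviates from it by $O(\sqrt{\operatorname{rank}(\Pi)})$ --- relative error $O(1/\sqrt{\operatorname{rank}(\Pi)})$. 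As $d_{\mathrm p},d_{\mathrm h}\to\infty$ (which happens as $\nplayers$ or $m$ grows, except in the $2{\times}2$ case where $d_{\mathrm h}=1$), both quadratic forms concentrate about their means with vanishing relative error, so $\P(\pay)\to\big(1+\sqrt{d_{\mathrm h}/d_{\mathrm p}}\big)^{-1}=:\P^{\ast}$ in probability and $\operatorname{Var}(\P(\pay))\to 0$. The residual spread at finite size is $\Theta(1)$ precisely when $d_{\mathrm h}$ (or $d_{\mathrm p}$) is $O(1)$, which is exactly why the empirical distribution is wide for small games and tightens as the game grows --- the variance half of the observation.

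\paragraph{Step 3: monotonicity of $\P^{\ast}$, and the expected obstacle.}
It remains to show $\P^{\ast}=\big(1+\sqrt{d_{\mathrm h}/d_{\mathrm p}}\big)^{-1}$ decreases when $\nplayers$ or $m$ increases, equivalently that the potential fraction $d_{\mathrm p}/(d_{\mathrm p}+d_{\mathrm h})=(\npures-1)/\big(\sum_{\play}\npures(\npures_\play-1)/\npures_\play\big)$ decreases; in the uniform setting $d_{\mathrm h}/d_{\mathrm p}=\big(\nplayers m^{\nplayers-1}(m-1)-m^{\nplayers}+1\big)/(m^{\nplayers}-1)$, which one checks is increasing in $\nplayers$ and in $m$ (for $\nplayers=2$ it equals $(m-1)/(m+1)$, increasing to $1$; for $m=2$ it grows without bound in $\nplayers$) --- a finite, low-degree computation. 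I do not expect a deep obstacle here. The one content-bearing step is the spectral identity of Step 1, which is mechanical but needs careful orientation bookkeeping; and the one loose end in the \emph{statement} is that ``decreases'' should be read as comparative statics (between consecutive sizes, with $\nplayers$ or the action counts held fixed), since potentialness is a single number per size rather than a sequence. The two points warranting extra care are making the Hanson--Wright bound quantitative enough that the relative error provably vanishes, and, for a non-uniform version (distinct $\npures_\play$), tracking the masses of $\pflows$ and $\hflows\cap\feasflows$ inside the several $\npures_\play$-eigenspaces of $\adj{\dmap}\dmap$ rather than working with a single scalar.
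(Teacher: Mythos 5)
Your proposal is essentially correct, but it takes a genuinely different route from the paper: the paper does not prove this statement at all. It is an empirical observation, backed by computing the potentialness of $10^6$ uniformly sampled games per setting and reporting the resulting empirical distributions (\cref{fig:distribution_potentialness}); no spectral or concentration argument appears anywhere in the text. Your analytic counterpart holds up under scrutiny: the block identity $\adj{\dmap}\dmap=\bigoplus_{\play} L(K_{\npures_\play})\otimes I$ is correct with the standard inner products, since the player-$\play$ edges over a fixed $\pure_{\others}$ form a complete graph on $\pures_\play$ and the cross-player blocks vanish because different players' deviations occupy disjoint edge sets; hence in the uniform case $Q_{\mathcal P}$ and $Q_{\mathcal H}$ are $m$ times orthogonal projections of ranks $m^{\nplayers}-1$ and $\nplayers m^{\nplayers-1}(m-1)-m^{\nplayers}+1$, Hanson--Wright gives concentration of $\P(\pay)$ around $\bigl(1+\sqrt{d_{\mathrm h}/d_{\mathrm p}}\bigr)^{-1}$, and the monotonicity check does go through (increasing $d_{\mathrm h}/d_{\mathrm p}$ in $m$ reduces to $m^{\nplayers}\geq \nplayers m-\nplayers+1$, and in $\nplayers$ to $m^{\nplayers+1}\geq(\nplayers+1)m-\nplayers$). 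What your route buys is a quantitative prediction the paper does not have: a closed-form limiting value, an $O(1/\sqrt{d})$ concentration rate, and an explanation of why small games (where $d_{\mathrm h}=O(1)$, e.g. the $2\times 2$ case) show wide histograms. What the paper's route buys is exactly the finite-size empirical statement, including regimes where your asymptotics are not yet sharp. Two caveats: your argument shows that the limiting value decreases and the variance vanishes, not that the empirical mean is monotone between consecutive finite settings, which is what the observation literally reports (a plausible but unproven strengthening); and the clean ``$m$ times a projection'' structure uses equal action counts, so any setting with unequal $\npures_\play$ requires the multi-eigenvalue bookkeeping you flag at the end.
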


Our next objective is to analyze the connection between potentialness and the long-term behavior of learning dynamics. As a natural intermediate step, we first examine the existence of \textit{strict} Nash equilibria \textendash{} strategy distributions for which \cref{eq:def-NE} holds as a strict inequality \textendash{} as a function of potentialness. The reason is twofold.
First, strict Nash equilibria are the only possible candidates for the convergence of learning algorithms: \citet{GVM21} show that a strategy distribution is asymptotically stable under FTRL algorithms (such as Algorithm \ref{alg:omd}) if and only if it is a strict Nash equilibrium.
Second, strict Nash equilibria necessarily correspond to \textit{pure} strategy profiles. Since potential games always admit at least one pure Nash equilibrium \citep{monderer1996potential}, one might expect that games with high levels of potentialness have a higher probability of possessing a strict (hence, pure) Nash equilibrium (SPNE).%
\footnote{A pure Nash equilibrium fails to be strict if ties occur. In the context of random games, a pure Nash equilibrium is strict with probability $1$, as ties occur with probability zero.}

\begin{figure}[h!]
    \begin{center}
        \centerline{\includegraphics[width=0.5\columnwidth]{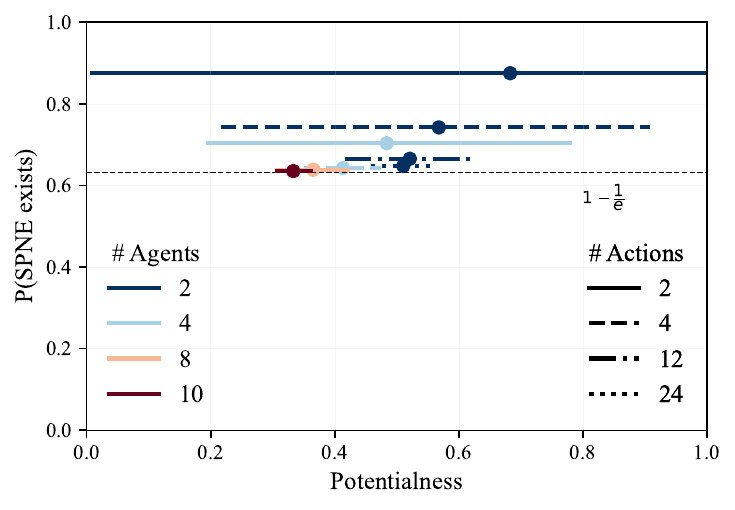}}
        \caption{
        Relationship between potentialness and existence of SPNE \textit{across different settings}.
        Vertical axis: empirical probability of SPNE existence as a function of the setting.
        Horizontal axis: Potentialness values.
        The width of the horizontal lines represents the observed range of potentialness values for each setting, corresponding to the support of the densities shown in \cref{fig:distribution_potentialness}.
        The dot on each line indicates the average potentialness for that setting.
        As game size increases, mean potentialness decreases, and the probability of SPNE existence approaches $1 - 1/e$ (dotted horizontal line).
        }
        \label{fig:probability_spne}
    \end{center}
\end{figure}
\citet{rinott2000number} showed that the probability of the existence of a pure Nash equilibrium in randomly drawn games converges to $1 - 1/e$ as either the number of players or the number of actions per player increases. This tendency is evident in \cref{fig:probability_spne}, which also illustrates the following: \textit{across different settings}, as the average potentialness increases, so does the probability of SPNE existence.

Next, we examine the relationship between SPNE existence and potentialness \textit{within the same setting}. For each setting, we discretize the potentialness range $[0,1]$ into 20 intervals $I_k := (\tfrac{k-1}{20}, \tfrac{k}{20}]$ for $k=1,\dots,20$, partition games accordingly, and compute the fraction of games in each group that have at least one SPNE; the results are visualized in \cref{fig:distribution_spne}.
\begin{figure}[h!]
    \begin{center}
        \centerline{\includegraphics[width=0.5\columnwidth]{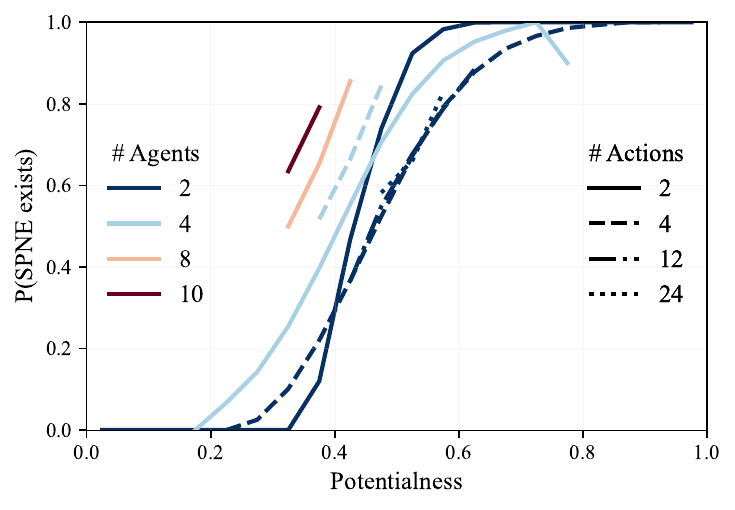}}
        \caption{Relationship between potentialness and existence of SPNE \textit{within the same setting}. Vertical axis: empirical probability of SPNE existence as a function of potentialness. Horizontal axis: potentialness values. For each setting, a higher potentialness increases the likelihood that a game has at least one SPNE.}
        \label{fig:distribution_spne}
    \end{center}
\end{figure}
We summarize our findings in the following observation:
\begin{observation}
    Both across different settings and within the same setting, a higher potentialness increases the likelihood that a game has at least one SPNE.
\end{observation}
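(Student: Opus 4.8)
The statement is empirical in character, so the plan is not to derive it from axioms but to substantiate it through a controlled Monte Carlo study of random games, in the spirit of \cref{fig:probability_spne} and \cref{fig:distribution_spne}. Concretely, I would fix a family of \emph{settings}, each a pair $(\nplayers, m)$ of a player count and a common action count per player, and for each setting draw a large sample (e.g.\ $10^6$) of games $\gamefull$ with payoffs $\pay_\play(\pure) \sim \mathcal{U}([0,1])$ drawn independently across profiles and players. For each sampled game I would compute its potentialness $\P(\pay) = \norm{\pflow}/(\norm{\pflow}+\norm{\hflow})$ via the precomputed operators $\dmap$ and $\pproj = \grad\pinv\grad$, and in parallel decide whether the game admits a strict pure Nash equilibrium by sweeping over all $\npures = \prod_\play \npures_\play$ pure profiles and checking \eqref{eq:def-NE} with strict inequalities — which, for continuously distributed payoffs, holds almost surely exactly when \emph{some} pure Nash equilibrium exists.

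For the \emph{across-settings} half of the claim, I would record for each setting the pair (mean potentialness, empirical probability of SPNE existence) and verify that this finite collection of points is (approximately) monotone: settings with larger average $\P(\pay)$ display larger SPNE frequency. I would anchor the two ends of the range theoretically using \cref{prop:potentialness}: $\P(\pay)=1$ forces $\pay$ to be potential, hence a pure NE exists by \citet{monderer1996potential}, while $\P(\pay)=0$ forces $\pay$ to be harmonic, which generically carries a fully mixed equilibrium and no pure one. I would also cross-check the limiting behaviour against \citet{rinott2000number}: as the settings grow and (by the preceding observation) their mean potentialness concentrates, the SPNE frequency should approach the $1-1/e$ value predicted there.

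For the \emph{within-setting} half, I would, for each fixed $(\nplayers,m)$, partition the sampled games by potentialness into the bins $I_k = (\tfrac{k-1}{20}, \tfrac{k}{20}]$ used in \cref{fig:distribution_spne}, estimate $\Pr[\text{SPNE exists}\mid \P(\pay)\in I_k]$ by the empirical fraction of games in each bin with at least one SPNE, and check that these conditional estimates are increasing in $k$. Binomial confidence intervals on each bin would quantify the sampling noise and flag which part of the trend is statistically robust.

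The main obstacle is precisely that the claim is a statistical regularity rather than a theorem: exact monotonicity should not be expected, only a robust trend, and the evidence weakens exactly where it matters most — since the potentialness distribution concentrates at small values as the game grows, the high-potentialness bins become sparsely populated and their SPNE frequencies noisy, so confidence intervals there are wide. The honest conclusion is therefore a qualitative monotone trend documented across settings and across bins, with the two endpoints pinned down rigorously as above; establishing a clean closed-form relationship between $\P(\pay)$ and $\Pr[\text{SPNE}]$ is out of reach and, given the geometry of the Hodge decomposition, almost certainly false.
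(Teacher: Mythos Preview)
Your proposal is correct and mirrors the paper's own treatment: the statement is presented there purely as an empirical observation, substantiated by exactly the Monte Carlo methodology you describe (uniform random payoffs, $10^6$ games per setting, potentialness computed via the precomputed $\dmap$ and $\pproj$, SPNE checked by exhaustive sweep, then the across-settings comparison of \cref{fig:probability_spne} and the within-setting binning into twenty intervals of \cref{fig:distribution_spne}). Your additions — theoretical anchoring of the endpoints via \cref{prop:potentialness} and \citet{monderer1996potential}, the cross-check against \citet{rinott2000number}, and binomial confidence intervals on the bins — go slightly beyond what the paper reports but are natural refinements rather than a different approach.
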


Next, we examine the relationship between potentialness and the long-term behavior of learning dynamics in random games. The existence of an SPNE ensures \textit{local} convergence of learning algorithms \citep{MZ19}, prompting the question of whether higher potentialness not only increases the probability of having an SPNE, but also influences its \textit{basin of attraction} \textendash{} the set of initial conditions leading to convergence.

Following the same approach as before, we partition for each setting (number of agents and actions) the potentialness range into 20 intervals, $I_k := (\tfrac{k-1}{20}, \tfrac{k}{20}]$ for $k=1,\dots,20$, and group together games with potentialness in $I_k$. For each group $k$, we run \cref{alg:omd} with step-size $\eta_t = \eta_0 \cdot t^{-\beta}$, letting $\eta_0 = 2^3$ and $\beta = 20^{-1}$. The initial condition is fixed to the uniform strategy, $s_{i,0} = A_i^{-1} \mathbf{1}$, for all agents $i$. The fraction of converging instances in games that possess a SPNE is plotted against the group's potentialness value, with results shown in \cref{fig:random_learning_spne} and summarized in \cref{obs:high-pot-high-convergence}.
\begin{figure}[!h]
    \begin{center}
        \centerline{\includegraphics[width=0.5\columnwidth]{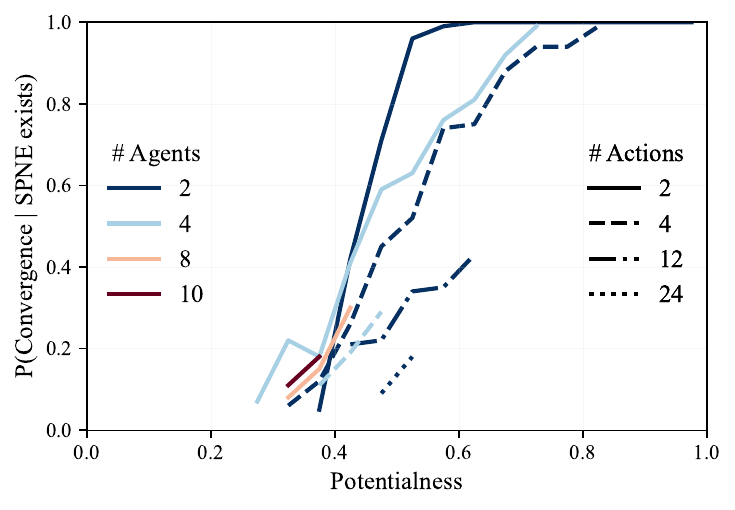}}
        \caption{Empirical probability of convergence of \cref{alg:omd} as a function of potentialness. The analysis considers only random games with at least one SPNE; higher potentialness increases the likelihood of reaching equilibrium.}
        \label{fig:random_learning_spne}
    \end{center}
    \vskip 0.1in
\end{figure}
\begin{observation}
\label{obs:high-pot-high-convergence}
    Higher potentialness increases the likelihood of reaching equilibrium using \cref{alg:omd}, assuming the existence of at least one SPNE.
\end{observation}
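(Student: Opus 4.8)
The plan is to establish Observation~\ref{obs:high-pot-high-convergence} in the same empirical spirit as the preceding observations, since --- unlike Propositions~\ref{prop:potential-games} and~\ref{prop:potentialness} --- it is a statement about the \emph{typical} behavior of learning dynamics on a random ensemble rather than a closed-form identity. First I would fix the family of settings $(\nplayers,\npures_i)$ used throughout \cref{sec:num_exp} and, for each, draw a large dataset (here $10^6$) of games with i.i.d.\ payoffs $\pay_\play(\pure)\sim\mathcal U([0,1])$. For every game I would evaluate the potentialness $\P(\pay)$ using the precomputed matrices for $\dmap$ and $\pproj$ from the Scalability paragraph, and then restrict attention to the subfamily of games possessing at least one SPNE --- this conditioning is forced by the statement itself, and is also natural because by~\citet{GVM21} only strict Nash equilibria can be asymptotically stable under \cref{alg:omd}.

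Next I would reproduce the binning/run/measure pipeline underlying \cref{fig:random_learning_spne}: partition the retained games into the twenty buckets $I_k=(\tfrac{k-1}{20},\tfrac{k}{20}]$ by potentialness, run \cref{alg:omd} in each game from the uniform initialization $s_{\play,0}=\npures_\play^{-1}\mathbf 1$ with the step-size schedule $\eta_t=\eta_0 t^{-\beta}$ ($\eta_0=2^3$, $\beta=1/20$), declare convergence when the relative utility loss $\ell_\play(s_t)$ drops below $\varepsilon=10^{-8}$ for all $\play\in\players$ within $T=2000$ iterations, and record the empirical convergence frequency $p_k$ in each bucket. The content of the observation is then precisely that $p_k$ is (near-)monotonically increasing in $k$; establishing it amounts to reading off this trend and checking it holds uniformly across the chosen settings.

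The main obstacle is that this is an empirical regularity rather than a theorem, so the real work is a robustness argument: one must verify the monotone trend is not an artifact of the specific step-size $(\eta_0,\beta)$, the horizon $T$, or the fixed uniform initial condition. I would therefore rerun the sweep across several step-size schedules and across randomized initializations --- the latter also speaking to the basin-of-attraction question raised just before the observation --- and confirm the trend persists. A second subtlety is disentangling potentialness from correlated structural features, most notably the number of pure equilibria, which itself tends to grow with potentialness; I would stratify by that count to argue that potentialness retains predictive value on its own. Finally, for the high-potentialness end one could add a partial \emph{theoretical} complement: potential games enjoy global convergence of FTRL-type dynamics, so a continuity/perturbation argument on the basin of attraction near $\P(\pay)=1$ would give a rigorous foothold --- though this addresses only a neighborhood of $1$, not the full range covered by the observation, and is exactly the kind of statement left open by~\citet{candogan2013dynamics}.
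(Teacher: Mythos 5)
Your proposal matches the paper's own support for this observation: it is established empirically via exactly the pipeline you describe --- restricting to random games with an SPNE, binning by potentialness into the intervals $I_k$, running \cref{alg:omd} from the uniform initialization with $\eta_0=2^3$, $\beta=1/20$, and reading off the monotone increase of the convergence frequency (\cref{fig:random_learning_spne}), with robustness to randomized initializations and step-sizes checked in the remark and \cref{app:exp}. No gap; same approach.
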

\begin{remark*}
Different initial conditions, uniformly sampled over the feasible space, lead to the same outcomes as stated in \cref{obs:high-pot-high-convergence}, as discussed in \cref{app:exp}. Employing smaller step-sizes similarly leads to analogue results; however, bigger step-sizes ensure better convergence rates.

\end{remark*}

 
\paragraph{Economic Games}
As compared to random matrix games, many games analyzed in the economic sciences have more structure in the payoffs. Arguably, one of the most important classes of economic games are auctions and contests, which are widely used to describe strategic interaction in markets \citep{krishna2009auction}. In these games, the utility functions of agents have a specific form, and the allocation rule is monotonic in the bids. Well-known auctions and contests include: 

 \begin{itemize}
		\item First-Price Sealed-Bid (FPSB) Auction 
		\begin{equation}
        \label{econ-game-1}
			u_i(a, v_i) = x_i(a) \cdot (v_i - a_i)
		\end{equation}
		\item Second-Price Sealed-Bid (SPSB) Auction
		\begin{equation}
        \label{econ-game-2}
			u_i(a, v_i) = x_i(a) \cdot (v_i - \max_{j \neq i} a_j)
		\end{equation}
		\item All-Pay Auction 
		\begin{equation}
        \label{econ-game-3}
			u_i(a, v_i) = x_i(a) \cdot v_i - a_i
		\end{equation}
        \item War of Attrition (WoA)
        \begin{equation}
        \label{econ-game-4}
            u_i(a, v_i) = x_i(a) \cdot (v_i - \max_{j \neq i} a_j) - (1-x_i(a)) \cdot a_i
        \end{equation}
		\item Tullock Contest
		\begin{equation}
        \label{econ-game-5}
			u_i(a, v_i) = 
			\begin{cases}
				v \cdot \frac{a_i}{\Sigma_j a_j} - a_i &\text{ if } \Sigma_j a_j > 0 \\
				\frac{v}{n} &\text{ else }
			\end{cases}
		\end{equation}
\end{itemize}
In the above, $a_i \in \pures_\play$ represents the \textit{bid} of agent $i$, and $a \in \pures = \prod_{\play\in\players}\pures_{\play}$ denotes the complete bid (or action) profile. Similarly, $v_\play$ corresponds to agent $i$’s \textit{value}, and $x_i(a): \Acal \rightarrow [0,1]$ is the \textit{allocation function}, which incorporates a random tie-breaking rule, given by
\begin{equation}
    x_i(a) = 
    \begin{cases}
        \frac{1}{n_\text{max} } &\text{ if } a_i = \max_{j\in\players } a_j, \\
        0 &\text{ if } a_i < \max_{j\in\players} a_j,
    \end{cases}
\end{equation}
where $n_\text{max}$ denotes the number of bids that attain the maximum. 

Auctions are typically defined over continuous action spaces, where each player’s bid is a real number, normalized without loss of generality to the interval $\pures_\play = [0,1]$. However, the decomposition proposed by \citet{candogan2011flows} applies only to finite games. To make it applicable, we discretize each action space $\pures_\play$ into $\npures_\play$ equidistant points.
We remark that, for practical purposes, auctions \textit{are} inherently discrete, as bids can only be submitted up to a finite number of decimal places.

The choice of discretization appears to have little impact on a game’s potentialness. \cref{fig:econgames_discr} presents potentialness levels computed for varying discretizations and agents valuations. As the discretization becomes finer, the potentialness stabilizes at an asymptotic value; however, the number of pure Nash equilibria can exhibit sharp changes. For instance, in the symmetric First-Price Sealed-Bid Auction, there is one strict NE for $A_i = 21$, whereas for $A_i = 20$, an additional non-strict pure NE emerges alongside the strict NE.

\begin{figure}[h]
    \begin{center}
        \centerline{\includegraphics[width=0.6\columnwidth]{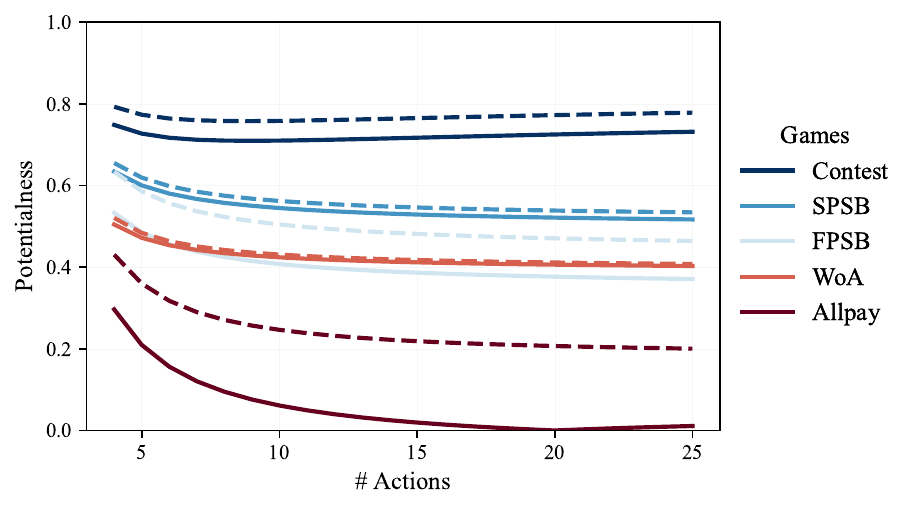}}
        \caption{Potentialness of the economic games given in \cref{econ-game-1,econ-game-2,econ-game-3,econ-game-4,econ-game-5}, with 2 agents. We consider different discretizations (that is, number of actions), and different valuations. The solid lines show the potentialness in the symmetric setting, where both agents have values $v_1=v_2=1$, while the dashed line shows the asymmetric settings with $v_1=\tfrac 3 4, v_2=1$. As the discretization becomes finer, the potentialness stabilizes at an asymptotic value.}
        \label{fig:econgames_discr}
    \end{center}
    \vskip 0.1in
\end{figure}
\begin{observation}
    The potentialness of the discretized economic games given in \cref{econ-game-1,econ-game-2,econ-game-3,econ-game-4,econ-game-5} is an inherent property of the underlying (continuous) game, and does not depend on the choice of discretization.
\end{observation}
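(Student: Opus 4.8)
The statement is, properly speaking, a convergence claim: writing $\pay^{(m)}$ for the finite game obtained by restricting the continuous payoffs $\pay_\play \from [0,1]^{\nplayers} \to \R$ to the uniform $m$-point grid in each coordinate, the assertion is that $\P(\pay^{(m)})$ converges as $m \to \infty$ to a limit depending only on the continuous game. The plan is to build an $L^2$ continuum analogue of the deviation--gradient machinery of \cref{sec:potentialness}, use it to define a continuum potentialness $\P_\infty(\pay)$, and then show that the grid deviation flow and the grid orthogonal projection onto potential flows converge, in the senses needed to pass to the limit inside \eqref{eq:potentialness}. A convenient reduction is that \eqref{eq:potentialness} is invariant under multiplying either norm by a positive constant, so one may freely rescale the grid inner products on $\pots$ and $\flows$ as Riemann sums --- $m^{-\nplayers}\sum_{\pure \in \pures}$ on profiles and the matching weighted $m^{-(\nplayers+1)}\sum_{\dev \in \devs}$ on deviations --- without changing $\P(\pay^{(m)})$; this is exactly what lets the grid sums converge to $L^2$ inner products.

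\emph{Continuum objects.} Take $\pots \defeq L^2([0,1]^{\nplayers})$ and let $\flows$ be the $L^2$ space over the continuous deviation set $\bigsqcup_{\play \in \players} [0,1]^{\nplayers - 1} \times [0,1]^2$ --- one sheet per deviating player $\play$, parametrized by the co-players' profile $\pure_{\others}$ and the (antisymmetric) pair $(\pure_\play, \purealt_\play)$. Define $\dmap \from \pays \defeq \pots^{\nplayers} \to \flows$ and $\grad \from \pots \to \flows$ by the same pointwise formulas as in the finite case, e.g. $(\dmap\pay)(\play; \pure_{\others}, \pure_\play, \purealt_\play) = \pay_\play(\purealt_\play, \pure_{\others}) - \pay_\play(\pure_\play, \pure_{\others})$; both are bounded linear maps, and $\im\grad \subseteq \im\dmap$ (take $\pay_\play = \pot$ for every $\play$). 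Let $\pproj$ denote the $L^2$-orthogonal projection of $\flows$ onto $\overline{\im\grad}$, and set $\P_\infty(\pay) \defeq \norm{\pproj\dmap\pay} / (\norm{\pproj\dmap\pay} + \norm{(I - \pproj)\dmap\pay})$, which is well defined whenever $\dmap\pay \neq 0$ (the case of interest). One can further check, mirroring \citet{candogan2011flows}, that $\im\dmap$ lies in the kernel of the continuum second coboundary $d_1$, so that $(I - \pproj)\dmap\pay$ is genuinely harmonic and $\P_\infty$ matches the spirit of \eqref{eq:potentialness}; this, however, is not needed for the convergence argument below.

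\emph{Passing to the limit.} The continuous auction and contest utilities in \cref{econ-game-1,econ-game-2,econ-game-3,econ-game-4,econ-game-5} are bounded and discontinuous only on the tie set $\{\pure : \pure_\play = \pure_j \text{ for some } \play \neq j\}$ (plus, for the Tullock contest, the origin), which is Lebesgue-null; hence each $\pay_\play$, and therefore $\dmap\pay$, is Riemann integrable. Two facts then carry the proof. First, the piecewise-constant extension of the grid deviation flow $\dmap\pay^{(m)}$ (the deviation flow, in the sense of \cref{sec:potentialness}, of the finite game $\pay^{(m)}$) converges in $\flows$ to $\dmap\pay$: this is Riemann-sum/dominated convergence for bounded, a.e.-continuous integrands, and the $O(1/m)$ fraction of grid points lying exactly on the tie set does no harm since the integrands are bounded there. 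Second, the grid projections $\pproj^{(m)}$ converge strongly to $\pproj$, i.e. the grid gradient images converge to $\overline{\im\grad}$ in the Mosco sense. One direction is immediate (for continuous $\pot$, $\grad^{(m)}\pot^{(m)} \to \grad\pot$ in $L^2$ by uniform continuity, and such $\grad\pot$ are dense in $\im\grad$); the converse --- every weak limit of a bounded sequence of grid gradient flows lies in $\overline{\im\grad}$ --- follows from a \emph{mesh-independent} discrete Poincaré inequality $\norm{\pot - \bar\pot} \le c_{\nplayers}\,\norm{\grad^{(m)}\pot}$, which keeps the mean-centred grid potentials $\pot^{(m)}$ bounded, so a weak subsequential limit $\pot$ satisfies $\grad\pot = \lim \grad^{(m)}\pot^{(m)}$ (tested against smooth flows via the adjointness $\inner{\grad^{(m)}\cdot}{\cdot}_1 = \inner{\cdot}{\div^{(m)}\cdot}_0$). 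Combining the two facts, $\norm{\pproj^{(m)}\dmap\pay^{(m)}} \to \norm{\pproj\dmap\pay}$ and $\norm{(I - \pproj^{(m)})\dmap\pay^{(m)}} \to \norm{(I - \pproj)\dmap\pay}$, whence $\P(\pay^{(m)}) \to \P_\infty(\pay)$, a quantity depending only on the continuous game.

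\emph{Main obstacle.} The crux is the mesh-independent Poincaré inequality --- equivalently, a uniform lower bound on the smallest nonzero eigenvalue of the (suitably normalized) grid response-graph vertex Laplacian $\div^{(m)}\grad^{(m)}$ as $m \to \infty$. This is precisely where the combinatorial structure of \citet{candogan2011flows} matters: because a player may deviate to \emph{any} other action, the response graph is a Cartesian product of $\nplayers$ copies of the \emph{complete} graph on $m$ vertices, whose (normalized) spectral gap does not degenerate; a direct computation --- decomposing $\pot - \bar\pot$ through the conditional means over each coordinate --- then yields the explicit mesh-free bound $\norm{\pot - \bar\pot}^2 \le \tfrac{\nplayers}{2}\,\norm{\grad^{(m)}\pot}^2$. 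Had the response graph instead used only nearest-neighbour deviations (a path per player), the gap would be $\Theta(1/m^2)$ and the limit could genuinely fail to exist, so this step is not a technicality. A secondary point is verifying the continuum Hodge picture on $[0,1]^{\nplayers}$ (closedness of $\im\grad$, which is again the Poincaré bound, and persistence of $\im\dmap \subseteq \ker d_1$ in the limit); everything else --- the Riemann-sum estimates, the scale-invariance bookkeeping, and the handling of the null discontinuity set --- is routine. An alternative route that avoids naming a limit object is to show directly that $\{\P(\pay^{(m)})\}_m$ is Cauchy via a grid-refinement comparison (the $m$-grid embeds into the $km$-grid), but controlling the change of the projection under refinement relies on the same uniform estimate, so introducing the continuum limit is the cleaner path.
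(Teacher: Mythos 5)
The first thing to note is that the paper does not prove this statement at all: it is an Observation backed purely by the numerical evidence of \cref{fig:econgames_discr}, where potentialness is computed for a range of discretizations and valuations and is simply seen to stabilize as the grid is refined. Your proposal therefore takes a genuinely different route: you formalize the claim as a convergence statement $\P(\pay^{(m)})\to\P_\infty(\pay)$ for a continuum ($L^2$) analogue of the deviation/gradient machinery, and you correctly isolate the one non-routine ingredient --- a mesh-independent Poincar\'e inequality for the grid response graph, available precisely because unilateral deviations may go to \emph{any} alternative action, so the response graph is a Cartesian product of complete graphs $K_m$ whose suitably normalized spectral gap does not degenerate as $m\to\infty$. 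Combined with the scale-invariance of \eqref{eq:potentialness} under a uniform rescaling of the inner product on $\flows$ (which legitimizes the Riemann-sum normalization) and the fact that the auction/contest payoffs are bounded with Lebesgue-null discontinuity sets (ties, plus the origin for the Tullock contest), this yields $L^2$ convergence of the deviation flows and Mosco/strong convergence of the projections, hence convergence of potentialness; the skeleton looks sound. What each approach buys: the paper's experiments speak to the finitely many discretizations actually used (and reveal that, unlike potentialness, the equilibrium set can change abruptly with the grid), but prove nothing; your argument would upgrade the observation to a theorem, though only asymptotically --- the literal wording ``does not depend on the choice of discretization'' is visibly false at coarse grids in the paper's own figure, so your limit formulation is the right reading, but absent an explicit rate it says nothing about how fine a grid suffices. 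Be aware, too, that as written this is a plan rather than a proof: the uniform Poincar\'e constant, the exact intertwining of grid and continuum gradients under piecewise-constant extension (including the bookkeeping for orientation and for cells where both actions fall in the same grid cell), and the passage from Mosco convergence of the subspaces to strong convergence of the projected flows are asserted rather than carried out --- none of these looks problematic, but that is where the actual work lies.
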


A detailed analysis of the decomposition of first- and second-price auctions reveals that their harmonic components are identical. When varying the payment rule through convex combinations of the first- and second-price rules, only the potential and non-strategic components of the decomposition change.
This observation supports the intuition that the allocation rule determines the strategic difficulty posed by the game. In contrast, contests, that have a smoothed version of this allocation, show higher level of potentialness.

Given a game decomposed into its potential and harmonic components, $\pay = \ppay + \hpay$, potentialness describes the relative weight of the potential component over the harmonic one. One can then build a game with prescribed level of potentialness by considering the convex combination $\pay_\alpha \defeq \alpha \ppay + (1-\alpha)\hpay$, with $\alpha\in[0,1]$.
\cref{fig:random_learning} shows the convergence behavior of \cref{alg:omd} as a function of potentialness in games built with this procedure from the economic games given in \cref{econ-game-1,econ-game-2,econ-game-3,econ-game-4,econ-game-5}:
\begin{figure}[h]
    \begin{center}
        \centerline{\includegraphics[width=0.6\columnwidth]{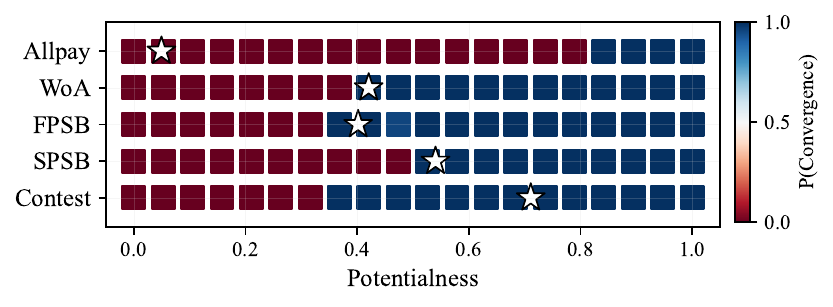}}
        \caption{Convergence of OMD in economic games. Each square corresponds to a convex combination $\pay_\alpha \defeq \alpha \ppay + (1-\alpha)\hpay$ of the potential and harmonic components of the respective economic game (vertical axis); the corresponding level of potentialness can be read on the horizontal axis. The color of the square shows the empirical frequency of convergence to an equilibrium using OMD ($\eta=2^8, \beta=\tfrac{1}{20}$) for $10^2$ randomly generated initial strategies. The stars indicate the potentialness of the original economic game. For each game, there exists a potentialness threshold above which the algorithm converges; this threshold is lower for games with a higher original potentialness, suggesting that convergence in games with high potentialness is relatively more robust against harmonic perturbations.}
        \label{fig:random_learning}
    \end{center}
    \vskip 0.1in
\end{figure}

We observe that, for each game, there exists a \textit{potentialness threshold} beyond which the algorithm begins to converge. As expected, this threshold coincides with the existence of an SPNE:
When potentialness exceeds a certain game-specific value, an SPNE exists, and and \cref{alg:omd} converges;
below this threshold, no pure equilibrium exists, and \cref{alg:omd} fails to converge. Furthermore, this threshold is lower for games with a higher original potentialness, suggesting that convergence in games with high potentialness is relatively more robust against harmonic perturbations.


Among the considered economic games, the all-pay auction is the ``hardest to learn'', namely it is the one with lowest potentialness, and hightest convergence threshold. Interestingly, while the all-pay auction in the complete-information setting only has a mixed equilibrium, its \textit{incomplete-information} counterpart can admit pure \textit{Bayes-Nash equilibria} (BNE). The work of \citet{soda2022} shows that BNE are learnable in discrete games using first-order methods such as \cref{alg:omd} . 
To better understand the different equilibrium structures between the complete and the incomplete information case, we extend our analysis to the richer Bayesian framework, examining it through the lens of potentialness.

\paragraph{Bayesian Economic Games} \textit{Bayesian} (or \textit{incomplete-information}) games are the standard approach in auction theory to model the economic games we considered \citep{krishna2009auction}. 
Compared to a normal-form game, a Bayesian games $B = (\players, \types, \pures, F, \pay)$ is characterized by an additional \textit{type space} $\types$ and a known \textit{prior distribution} $F$ over this type space. 
Each player $i \in \players$ observes a private type, which is drawn from $\types_i$ according to the prior distribution $F$. In the examples given in \cref{econ-game-1,econ-game-2,econ-game-3,econ-game-4,econ-game-5}, the private type is the valuation $v_i$.
After observing their values, the agents play an action, i.e., submit a bid $a_i \in \pures_i$, and receive their utility $\pay_i(a_i, a_{-i}, v_i)$, which depends on their type.
A pure strategy in such a Bayesian game is a function $\beta_i: \types_i \rightarrow \pures_i$ that maps players' types to actions. Given a strategy profile, one can compute the expected (ex-ante) utility $ \tilde u_i = \mathbb{E}_{v \sim F} [ u_i(\beta_i(v_i), \beta_{-i}(v_{-i}), v_i)] $. A \textit{Bayes-Nash equilibrium} (BNE) is a strategy profile $\beta$, where no player can increase the expected utility $\tilde u$ by deviating.

As in the complete-information case, we consider only finite action spaces, and additionally assume finite type spaces. This allows us to reformulate the Bayesian game as a finite normal-form game, decompose it using \cref{eq:flows-decomposition}, and analyze its potentialness, as given by \cref{eq:potentialness}.
The actions of the induced normal-form game consist of all the possible strategies of the original incomplete-information game.
Specifically, an agent with $V_i$ types and $A_i$ actions in the Bayesian game translates to an agent with $A_i^{V_i}$ pure actions in the normal-form representation.
Clearly, the size of the resulting normal-form game grows rapidly and quickly becomes intractable.  To mitigate this issue,
we restrict our analysis to \textit{non-decreasing strategies}, as is customary in auction theory. This reduces the number of strategies to $\binom{V_i + A_i -1}{A_i - 1}$, enabling us to analyze Bayesian games with two players and up to four actions and types, corresponding to 35 non-decreasing strategies. The payoffs in the normal-form game represent the expected utilities in the Bayesian game, and the pure Nash equilibria in the normal-form game correspond to Bayes-Nash equilibria  the Bayesian game.

Following this procedure, we constructed the normal-form representation of Bayesian economic games with four actions, up to four types, and a uniform prior distribution. Our findings, summarized in \cref{fig:econgames_bayesian,obs:bayesian-games}, reveal a notable pattern: the Bayesian version of each game exhibits \textit{higher potentialness} than its complete-information counterpart (where the number of types is one). Specifically, potentialness increases as the number of types grows.
\begin{figure}
    \begin{center}
        \centerline{\includegraphics[width=0.6\columnwidth]{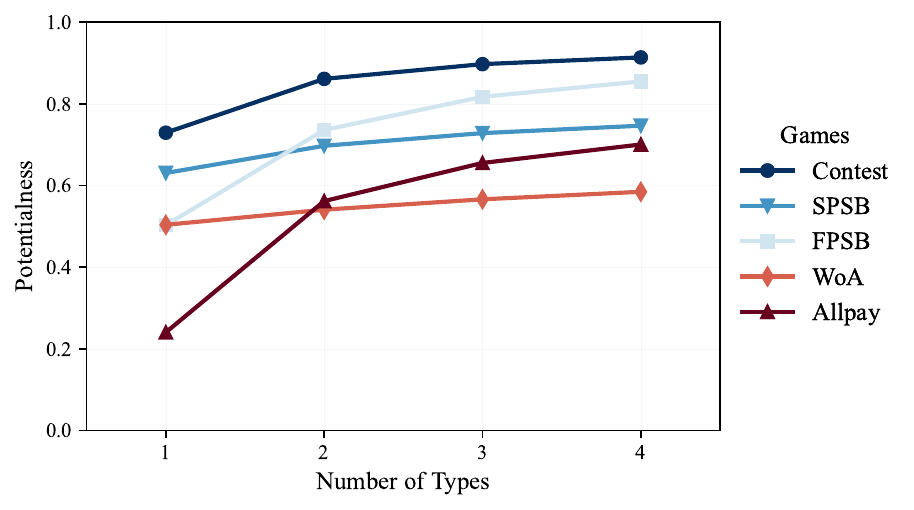}}
        \caption{Potentialness of Bayesian economic games. We consider the Bayesian version of economic games with two agents, each having four actions, $\pures_i = {0.0, 0.3, 0.6, 0.9}$, and independent, uniformly distributed types $\types_i = \{ \tfrac{i}{V_i} \mid i = 1, \dots, V_i \}$, where $V_i = 4$ denotes the number of types. Potentialness consistently increases as the number of types grows.}
        \label{fig:econgames_bayesian}
    \end{center}
\end{figure}

In particular, the increase in potentialness is substantial for the all-pay auction; with two or four types, it even leads to the existence of a pure BNE. This is especially noteworthy because learning algorithms successfully converge to equilibrium in Bayesian games with a BNE~\citep{soda2022}, whereas in the complete-information version of the all-pay auction \textendash{} which only admits mixed Nash equilibria \textendash{} convergence does not occur, cf. \cref{fig:random_learning}.

\begin{observation}
\label{obs:bayesian-games}
Potentialness consistently increases with the number of types in the Bayesian versions of the considered games. In particular, while \cref{alg:omd} fails to converge in the complete-information all-pay auction, it successfully converges to a pure Bayes-Nash equilibrium in the Bayesian version of the game.
\end{observation}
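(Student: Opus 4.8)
The plan is to establish \cref{obs:bayesian-games} the way the other observations in \cref{sec:num_exp} are established — by an explicit finite computation on the induced normal-form games — since the statement is empirical rather than closed-form. For each of the five economic games in \cref{econ-game-1,econ-game-2,econ-game-3,econ-game-4,econ-game-5} and each number of types $V\in\{1,2,3,4\}$ (where $V=1$ recovers the complete-information game), the first step is to build the induced finite normal-form game $\gamefull$: fix the action grid $\pures_i=\{0.0,0.3,0.6,0.9\}$ and the uniform type grid $\types_i=\{i/V:i=1,\dots,V\}$; enumerate the \emph{non-decreasing} pure strategies $\beta_i\from\types_i\to\pures_i$, of which there are $\binom{V+\npures_i-1}{\npures_i-1}$; treat each such strategy as one action of a two-player game; and set $\pay_i(\beta)=\mathbb{E}_{v\sim F}[\pay_i(\beta_i(v_i),\beta_{\others}(v_{\others}),v_i)]$, computed by averaging over the finitely many type profiles.

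Given this normal-form game, the second step is to compute its potentialness exactly as prescribed by \cref{eq:potentialness}: form the deviation flow $\dflow$ via \cref{eq:deviation-map}, apply the orthogonal projector $\pproj=\grad\pinv\grad$ of \cref{eq:exact-projection} to obtain the potential and harmonic flows $\pflow$ and $\hflow$, and return $\P(\pay)=\norm{\pflow}/(\norm{\pflow}+\norm{\hflow})$. Tabulating $\P(\pay)$ against $V$ for each game then exhibits the claimed monotone increase (this is the content of \cref{fig:econgames_bayesian}). For the second half of the statement, the third step is to run \cref{alg:omd} with the step-size schedule used elsewhere in \cref{sec:num_exp} and check, through the relative utility loss $\ell_i(s_t)$ and the tolerance $\varepsilon=10^{-8}$, that in the all-pay auction the dynamics fail to converge for $V=1$ but do converge for $V\in\{2,4\}$, with the limit point a pure strategy profile; since pure Nash equilibria of the induced game correspond to Bayes-Nash equilibria of the Bayesian game, this certifies convergence to a pure BNE, exactly where \citet{soda2022} guarantees learnability.

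The main obstacle is purely computational. The projector $\pproj$ is a $\dim\flows\times\dim\flows$ matrix with $\dim\flows=m^{2}(m-1)$ for two players, where $m=\binom{V+\npures_i-1}{\npures_i-1}$ is the number of non-decreasing strategies; at $\npures_i=V=4$ one already has $m=35$, which sits at the edge of the $\mathcal{O}(\nplayers^{2}m^{2\nplayers+2})$ cost flagged in the scalability paragraph, so any attempt to go beyond two players or four actions/types quickly becomes intractable. The restriction to non-decreasing strategies — standard in auction theory and harmless here because the relevant equilibria are monotone — is precisely the device that keeps $m$ manageable. Finally, I would be explicit that ``consistently increases'' is a claim verified over the finite range $V\le 4$ and for the specific grids chosen, not a proven asymptotic monotonicity; its robustness is corroborated by the stabilization of potentialness under refinement of the action grid already documented in \cref{fig:econgames_discr}.
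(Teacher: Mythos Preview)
Your proposal is correct and mirrors the paper's own treatment: \cref{obs:bayesian-games} is an empirical observation, and the paper supports it precisely by building the induced normal-form games (two agents, four actions, uniform types up to $V=4$, restricted to non-decreasing strategies), computing potentialness via \cref{eq:potentialness}, and reporting the results in \cref{fig:econgames_bayesian}, together with the noted existence of a pure BNE for the all-pay auction at $V\in\{2,4\}$. Your added caveat that the monotonicity is verified only over this finite range is apt and consistent with the scope the paper itself claims.
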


\section{Conclusions}
\label{sec:concl}
Understanding whether learning algorithms converge to a Nash equilibrium has been a long-standing question in game theory, and characterizing which games permit convergence and which do not remains an open problem in the study of learning in games. While it is well established that exact potential games guarantee convergence, we extend this analysis by relaxing this condition, investigating convergence in games that are quantitatively distant from being potential games.
A decomposition technique allows us to measure how ``potential''  a specific normal-form game is: For random matrix games, we find that
potentialness serves as a strong predictor for the existence of strict Nash equilibria, and for the convergence of online mirror descent. Economically motivated games, such as auctions and contests, impose additional structure on payoffs. Here, too, potentialness \textendash{} observed to increase with the number of types \textendash{} proves to be a reliable predictor of convergence.

Unlike individual simulation runs, where convergence heavily depends on algorithm initialization, potentialness provides an ex-ante predictor of average convergence, offering a systematic approach to studying learning in games. This insight addresses a long-standing question in the literature by shifting the focus from empirical simulations to a more principled measure of convergence potential.

Future research should aim to further validate these findings by examining a wider range of no-regret learning dynamics and expanding the analysis to broader classes of games. Additionally, improving the computational efficiency of potentialness, extending its application to continuous games, and leveraging it to design more effective learning algorithms remain key open directions.


\subsubsection*{Acknowledgments}

This project was funded by the Deutsche Forschungsgemeinschaft (DFG, German Research Foundation) under Grant No. BI 1057/9; and Project Number 277991500.
It is based upon work supported by the National Science Foundation under Grant No. DMS-1928930 and by the Alfred P. Sloan Foundation under grant G-2021-16778, while M. Bichler, D. Legacci, B. Pradelski, and M. Oberlechner were in residence at the Simons Laufer Mathematical Sciences Institute (formerly MSRI) in Berkeley, California, during the Fall 2023 semester.
This work was also supported in part by the French National Research Agency (ANR) in the framework of the PEPR IA FOUNDRY project (ANR-23-PEIA-0003), the ``Investissements d'avenir'' program (ANR-15-IDEX-02), the LabEx PERSYVAL (ANR-11-LABX-0025-01), MIAI@Grenoble Alpes (ANR-19-P3IA-0003) and IRGA-SPICE. P. Mertikopoulos is also affiliated with Archimedes, Athena Research Center, and was partially supported by project MIS 5154714 of the National Recovery and Resilience Plan Greece 2.0 funded by the European Union under the NextGenerationEU Program.

\bibliographystyle{informs2014}

\begin{thebibliography}{52}
    \providecommand{\natexlab}[1]{#1}
    \providecommand{\url}[1]{\texttt{#1}}
    \providecommand{\urlprefix}{URL }
    
    \bibitem[{Abreu \protect\BIBand{} Rubinstein(1988)}]{abreu1988structure}
    Abreu D, Rubinstein A (1988) The structure of nash equilibrium in repeated
      games with finite automata. \emph{Econometrica} 56(6):1259--1281.
    
    \bibitem[{Anagnostides et~al.(2022)Anagnostides, Panageas, Farina,
      \protect\BIBand{} Sandholm}]{anagnostides2022last}
    Anagnostides I, Panageas I, Farina G, Sandholm T (2022) On last-iterate
      convergence beyond zero-sum games. \emph{International Conference on Machine
      Learning}, 536--581 (PMLR).
    
    \bibitem[{Andrade et~al.(2021)Andrade, Frongillo, \protect\BIBand{}
      Piliouras}]{andrade2021learning}
    Andrade GP, Frongillo R, Piliouras G (2021) Learning in matrix games can be
      arbitrarily complex. \emph{Conference on Learning Theory}, 159--185 (PMLR).
    
    \bibitem[{Bichler et~al.(2023)Bichler, Fichtl, \protect\BIBand{}
      Oberlechner}]{soda2022}
    Bichler M, Fichtl M, Oberlechner M (2023) Computing {Bayes-Nash} equilibrium in
      auction games via gradient dynamics. \emph{Operations Research}
      \urlprefix\url{http://dx.doi.org/10.1287/opre.2022.0287}.
    
    \bibitem[{Bichler et~al.(2024)Bichler, Gupta, \protect\BIBand{}
      Oberlechner}]{bichler2024revenuefirstsecondpricedisplay}
    Bichler M, Gupta A, Oberlechner M (2024) Revenue in first- and second-price
      display advertising auctions: Understanding markets with learning agents.
      \urlprefix\url{https://arxiv.org/abs/2312.00243}.
    
    \bibitem[{Bichler et~al.(2025)Bichler, Lunowa, Oberlechner, Pieroth,
      \protect\BIBand{} Wohlmuth}]{bichler2023convergence}
    Bichler M, Lunowa SB, Oberlechner M, Pieroth FR, Wohlmuth B (2025) On the
      convergence of learning algorithms in bayesian auction games.
      \emph{Proceedings of the AAAI} .
    
    \bibitem[{Biggar \protect\BIBand{}
      Shames(2023)}]{biggarGraphStructureTwoplayer2023}
    Biggar O, Shames I (2023) The graph structure of two-player games.
      \emph{Scientific Reports} 13(1):1833.
    
    \bibitem[{Blum \protect\BIBand{} Kalai(1999)}]{BK99a}
    Blum A, Kalai AT (1999) Universal portfolios with and without transaction
      costs. \emph{Machine Learning} 35(3):193--205.
    
    \bibitem[{Brown(1951)}]{brownIterativeSolutionGames1951}
    Brown GW (1951) Iterative solution of games by fictitious play. \emph{{Activity
      Analysis of Production and Allocation }} 13(1):374--376.
    
    \bibitem[{Candogan et~al.(2011)Candogan, Menache, Ozdaglar, \protect\BIBand{}
      Parrilo}]{candogan2011flows}
    Candogan O, Menache I, Ozdaglar A, Parrilo PA (2011) Flows and decompositions
      of games: Harmonic and potential games. \emph{Mathematics of Operations
      Research} 36(3):474--503.
    
    \bibitem[{Candogan et~al.(2013{\natexlab{a}})Candogan, Ozdaglar,
      \protect\BIBand{} Parrilo}]{candogan2013dynamics}
    Candogan O, Ozdaglar A, Parrilo PA (2013{\natexlab{a}}) Dynamics in
      near-potential games. \emph{Games and Economic Behavior} 82:66--90.
    
    \bibitem[{Candogan et~al.(2013{\natexlab{b}})Candogan, Ozdaglar,
      \protect\BIBand{} Parrilo}]{candogan2013near}
    Candogan O, Ozdaglar A, Parrilo PA (2013{\natexlab{b}}) Near-potential games:
      Geometry and dynamics. \emph{ACM Transactions on Economics and Computation
      (TEAC)} 1(2):1--32.
    
    \bibitem[{Chen et~al.(2016)Chen, Mislove, \protect\BIBand{}
      Wilson}]{chen2016empirical}
    Chen L, Mislove A, Wilson C (2016) An empirical analysis of algorithmic pricing
      on amazon marketplace. \emph{Proceedings of the 25th international conference
      on World Wide Web}, 1339--1349.
    
    \bibitem[{Christodoulou et~al.(2012)Christodoulou, Mirrokni, \protect\BIBand{}
      Sidiropoulos}]{christodoulou2012convergence}
    Christodoulou G, Mirrokni VS, Sidiropoulos A (2012) Convergence and
      approximation in potential games. \emph{Theoretical Computer Science}
      438:13--27.
    
    \bibitem[{Daskalakis et~al.(2010)Daskalakis, Frongillo, Papadimitriou,
      Pierrakos, \protect\BIBand{} Valiant}]{daskalakis2010learning}
    Daskalakis C, Frongillo R, Papadimitriou CH, Pierrakos G, Valiant G (2010) On
      learning algorithms for {N}ash equilibria. \emph{{International Symposium on
      Algorithmic Game Theory}}, 114--125 (Springer).
    
    \bibitem[{Dodziuk(1976)}]{dodziukFiniteDifferenceApproachHodge1976}
    Dodziuk J (1976) Finite-{{Difference Approach}} to the {{Hodge Theory}} of
      {{Harmonic Forms}}. \emph{American Journal of Mathematics} 98(1):79--104.
    
    \bibitem[{Eckmann(1944)}]{eckmannHarmonischeFunktionenUnd1944}
    Eckmann B (1944) {Harmonische Funktionen und Randwertaufgaben in einem
      Komplex}. \emph{Commentarii Mathematici Helvetici} 17(1):240--255.
    
    \bibitem[{Flokas et~al.(2020)Flokas, Vlatakis-Gkaragkounis, Lianeas,
      Mertikopoulos, \protect\BIBand{} Piliouras}]{FVGL+20}
    Flokas L, Vlatakis-Gkaragkounis EV, Lianeas T, Mertikopoulos P, Piliouras G
      (2020) No-regret learning and mixed {Nash} equilibria: {They} do not mix.
      \emph{NeurIPS '20: Proceedings of the 34th International Conference on Neural
      Information Processing Systems}.
    
    \bibitem[{Foster \protect\BIBand{} Vohra(1997)}]{foster1997calibrated}
    Foster DP, Vohra RV (1997) Calibrated learning and correlated equilibrium.
      \emph{Games and Economic Behavior} 21(1-2):40.
    
    \bibitem[{Friedman(1996)}]{friedmanComputingBettiNumbers1996}
    Friedman J (1996) Computing {{Betti}} numbers via combinatorial {{Laplacians}}.
      \emph{Proceedings of the Twenty-Eighth Annual {{ACM}} Symposium on {{Theory}}
      of Computing - {{STOC}} '96}, 386--391 ({Philadelphia, Pennsylvania, United
      States}: {ACM Press}).
    
    \bibitem[{Fudenberg \protect\BIBand{} Levine(1998)}]{FL98}
    Fudenberg D, Levine DK (1998) \emph{The Theory of Learning in Games}, volume~2
      of \emph{Economic learning and social evolution} (Cambridge, MA: MIT Press).
    
    \bibitem[{Fudenberg \protect\BIBand{} Levine(1999)}]{FL99}
    Fudenberg D, Levine DK (1999) Conditional universal consistency. \emph{Games
      and Economic Behavior} 29(1):104--130.
    
    \bibitem[{Giannou et~al.(2021{\natexlab{a}})Giannou, Vlatakis-Gkaragkounis,
      \protect\BIBand{} Mertikopoulos}]{GVM21b}
    Giannou A, Vlatakis-Gkaragkounis EV, Mertikopoulos P (2021{\natexlab{a}}) The
      convergence rate of regularized learning in games: {From} bandits and
      uncertainty to optimism and beyond. \emph{NeurIPS '21: Proceedings of the
      35th International Conference on Neural Information Processing Systems}.
    
    \bibitem[{Giannou et~al.(2021{\natexlab{b}})Giannou, Vlatakis-Gkaragkounis,
      \protect\BIBand{} Mertikopoulos}]{GVM21}
    Giannou A, Vlatakis-Gkaragkounis EV, Mertikopoulos P (2021{\natexlab{b}})
      Survival of the strictest: {Stable} and unstable equilibria under regularized
      learning with partial information. \emph{COLT '21: Proceedings of the 34th
      Annual Conference on Learning Theory}.
    
    \bibitem[{Gilboa \protect\BIBand{} Matsui(1991)}]{GM91}
    Gilboa I, Matsui A (1991) Social stability and equilibrium. \emph{Econometrica}
      59(3):859--867.
    
    \bibitem[{Golan(1992)}]{golanFoundationsLinearAlgebra1992}
    Golan JS (1992) \emph{Foundations of {{Linear Algebra}}} ({Springer Science \&
      Business Media}), ISBN 978-94-015-8502-6.
    
    \bibitem[{Hart \protect\BIBand{} Mas-Colell(2000)}]{hart2000simple}
    Hart S, Mas-Colell A (2000) A simple adaptive procedure leading to correlated
      equilibrium. \emph{Econometrica} 68(5):1127--1150.
    
    \bibitem[{Hart \protect\BIBand{} Mas-Colell(2003)}]{HMC03}
    Hart S, Mas-Colell A (2003) Uncoupled dynamics do not lead to {Nash}
      equilibrium. \emph{American Economic Review} 93(5):1830--1836.
    
    \bibitem[{Heliou et~al.(2017)Heliou, Cohen, \protect\BIBand{}
      Mertikopoulos}]{heliou2017learning}
    Heliou A, Cohen J, Mertikopoulos P (2017) Learning with bandit feedback in
      potential games. \emph{Advances in Neural Information Processing Systems} 30.
    
    \bibitem[{Hofbauer \protect\BIBand{} Sandholm(2009)}]{HS09}
    Hofbauer J, Sandholm WH (2009) Stable games and their dynamics. \emph{Journal
      of Economic Theory} 144(4):1665--1693.
    
    \bibitem[{Jiang et~al.(2011)Jiang, Lim, Yao, \protect\BIBand{}
      Ye}]{jiangStatisticalRankingCombinatorial2011}
    Jiang X, Lim LH, Yao Y, Ye Y (2011) Statistical ranking and combinatorial
      {{Hodge}} theory. \emph{Mathematical Programming} 127(1):203--244.
    
    \bibitem[{Jonsson(2007)}]{jonssonSimplicialComplexesGraphs2007}
    Jonsson J (2007) \emph{Simplicial Complexes of Graphs} ({Springer Science \&
      Business Media}).
    
    \bibitem[{Jordan(1993)}]{Jor93}
    Jordan JS (1993) Three problems in learning mixed strategy {Nash} equilibria.
      \emph{Games and Economic Behavior} 5(3):368--386.
    
    \bibitem[{Juditsky et~al.(2011)Juditsky, Nemirovski, \protect\BIBand{}
      Tauvel}]{juditskySolvingVariationalInequalities2011}
    Juditsky A, Nemirovski A, Tauvel C (2011) Solving {{Variational Inequalities}}
      with {{Stochastic Mirror-Prox Algorithm}}. \emph{Stochastic Systems}
      1(1):17--58.
    
    \bibitem[{Krishna(2009)}]{krishna2009auction}
    Krishna V (2009) \emph{Auction theory} (Academic press).
    
    \bibitem[{Legacci et~al.(2024)Legacci, Mertikopoulos, \protect\BIBand{}
      Pradelski}]{LMP24}
    Legacci D, Mertikopoulos P, Pradelski BSR (2024) A geometric decomposition of
      finite games: {Convergence} vs. recurrence under exponential weights.
      \emph{ICML '24: Proceedings of the 41st International Conference on Machine
      Learning}.
    
    \bibitem[{Mertikopoulos et~al.(2024)Mertikopoulos, Hsieh, \protect\BIBand{}
      Cevher}]{MHC24}
    Mertikopoulos P, Hsieh YP, Cevher V (2024) A unified stochastic approximation
      framework for learning in games. \emph{Mathematical Programming}
      203:559--609.
    
    \bibitem[{Mertikopoulos et~al.(2018)Mertikopoulos, Papadimitriou,
      \protect\BIBand{} Piliouras}]{MPP18}
    Mertikopoulos P, Papadimitriou CH, Piliouras G (2018) Cycles in adversarial
      regularized learning. \emph{SODA '18: Proceedings of the 29th annual ACM-SIAM
      Symposium on Discrete Algorithms}.
    
    \bibitem[{Mertikopoulos \protect\BIBand{} Zhou(2019)}]{MZ19}
    Mertikopoulos P, Zhou Z (2019) Learning in games with continuous action sets
      and unknown payoff functions. \emph{Mathematical Programming}
      173(1-2):465--507.
    
    \bibitem[{Monderer \protect\BIBand{} Shapley(1996)}]{monderer1996potential}
    Monderer D, Shapley LS (1996) Potential games. \emph{Games and economic
      behavior} 14(1):124--143.
    
    \bibitem[{Munkres(1984)}]{munkresElementsAlgebraicTopology1984}
    Munkres JR (1984) \emph{Elements of Algebraic Topology} ({Perseus Books}).
    
    \bibitem[{Nemirovski(2005)}]{nemirovskiProxMethodRateConvergence2005}
    Nemirovski A (2005) Prox-{{Method}} with {{Rate}} of {{Convergence O}}(1/t) for
      {{Variational Inequalities}} with {{Lipschitz Continuous Monotone Operators}}
      and {{Smooth Convex-Concave Saddle Point Problems}}. \emph{SIAM J. on
      Optimization} 15(1):229--251.
    
    \bibitem[{Nemirovskij \protect\BIBand{} Yudin(1983)}]{nemirovskij1983problem}
    Nemirovskij AS, Yudin DB (1983) Problem complexity and method efficiency in
      optimization .
    
    \bibitem[{Nisan(2007)}]{nisan2007AlgorithmicGameTheory}
    Nisan N, ed. (2007) \emph{Algorithmic Game Theory} ({Cambridge ; New York}:
      {Cambridge University Press}), ISBN 978-0-521-87282-9, oCLC: ocn122526907.
    
    \bibitem[{Palaiopanos et~al.(2017)Palaiopanos, Panageas, \protect\BIBand{}
      Piliouras}]{PPP17}
    Palaiopanos G, Panageas I, Piliouras G (2017) Multiplicative weights update
      with constant step-size in congestion games: Convergence, limit cycles and
      chaos. \emph{NIPS '17: Proceedings of the 31st International Conference on
      Neural Information Processing Systems}.
    
    \bibitem[{Rinott \protect\BIBand{} Scarsini(2000)}]{rinott2000number}
    Rinott Y, Scarsini M (2000) On the number of pure strategy nash equilibria in
      random games. \emph{Games and Economic Behavior} 33(2):274--293.
    
    \bibitem[{Rosenthal(1973)}]{rosenthalClassGamesPossessing1973}
    Rosenthal RW (1973) A class of games possessing pure-strategy {{Nash}}
      equilibria. \emph{International Journal of Game Theory} 2(1):65--67.
    
    \bibitem[{Sanders et~al.(2018)Sanders, Farmer, \protect\BIBand{}
      Galla}]{sanders2018prevalence}
    Sanders JB, Farmer JD, Galla T (2018) The prevalence of chaotic dynamics in
      games with many players. \emph{Scientific {R}eports} 8(1):1--13.
    
    \bibitem[{Shalev-Shwartz(2012)}]{ShalevShwartz2012}
    Shalev-Shwartz S (2012) Online learning and online convex optimization.
      \emph{Foundations and Trends\textregistered{} in Machine Learning}
      4(2):107--194.
    
    \bibitem[{Viossat \protect\BIBand{} Zapechelnyuk(2013)}]{viossat2013no}
    Viossat Y, Zapechelnyuk A (2013) No-regret dynamics and fictitious play.
      \emph{Journal of Economic Theory} 148(2):825--842.
    
    \bibitem[{Yang \protect\BIBand{} Wang(2020)}]{yang2020overview}
    Yang Y, Wang J (2020) An overview of multi-agent reinforcement learning from
      game theoretical perspective. \emph{arXiv:2011.00583} .
    
    \bibitem[{Young(2004)}]{young2004strategic}
    Young HP (2004) \emph{Strategic learning and its limits} (OUP Oxford).
\end{thebibliography}

\newpage
\appendix
\section{Appendix}
\subsection{Decomposition in the space of payoffs}
\label{app:games-decomposition}
The reason \citet{candogan2011flows} work in the space $\flows$ of flows rather than in the space $\pays$ of payoffs to achieve a decomposition of games is twofold: first, the combinatorial Hodge theorem holds true on the space $\flows$ of flows (and in general on every chain group $C_k$ of a simplicial complex), while there is no such a theorem for the space $\pays$ of payoffs; second, two games $\pay \neq \alt\pay$ such that $\dflow = \alt\dflow$, albeit having different payoffs, display the same strategical properties and are effectively the ``same'' game\footnote{Quoting \citet{candogan2013near},  \textit{if [two games have the same deviation flow], then the equilibrium sets of these games are identical. However, payoffs at equilibria may differ, and hence they may be different in terms of their efficiency (such as Pareto efficiency) properties (see \citet{candogan2011flows})}.}, so looking at the deviation flow rather than at the payoff one gets rid of a redundancy that is intrinsic in the formulation of a game.

\paragraph{Non-strategic games}
To make this last point more precise \citet{candogan2011flows} introduce the notion of \define{non-strategic games}.
\begin{definition}
    A finite game in normal form $\gamefull$ is called \define{non-strategic} if  it has zero deviation flow:
    \begin{equation}
        \dflow = 0 \, .
    \end{equation}
The space of non-strategic games is denote by
\begin{equation}
    \nonstpays \defeq \kerDgames \, .
\end{equation}
\end{definition}

In a non-strategic game, all players are indifferent among all of their choices:
\begin{proposition}[\citealt{candogan2011flows}]
    The game $\gamefull$ is non-strategic if and only if
    \begin{equation}
        \pay_\play(\purealt_\play, \pure_{\others}) =  \pay_\play(\alt\purealt_\play, \pure_{\others}) \, ,
    \end{equation}
    for all $\play \in \players$, all $\pure_{\others} \in \pures_{\others}$, and all $\purealt_\play, \alt\purealt_\play \in \pures_\play$.
\end{proposition}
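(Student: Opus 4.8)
The plan is to unpack the definitions of the deviation map and of a unilateral deviation and to check both implications edge by edge on the response graph $\rgraph$; no topological input is needed.

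For the ``only if'' direction, I would assume the game is non-strategic, i.e.\ $\dflow = 0$. Fix a player $\play$, an opponent profile $\pure_{\others} \in \pures_{\others}$, and two actions $\purealt_\play, \alt\purealt_\play \in \pures_\play$. If $\purealt_\play = \alt\purealt_\play$ the asserted equality is trivial, so assume they differ; then the pair $\dev = \big((\purealt_\play, \pure_{\others}),\, (\alt\purealt_\play, \pure_{\others})\big)$ is a unilateral deviation with actor $\play$, hence an edge of $\rgraph$. By the definition of the deviation map, $(\dflow)_{\dev} = \pay_\play(\alt\purealt_\play, \pure_{\others}) - \pay_\play(\purealt_\play, \pure_{\others})$, and since $\dflow = 0$ this difference vanishes, which is exactly the claim.

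For the ``if'' direction I would argue conversely: take an arbitrary edge $\dev = (\pure, \purealt) \in \devs$ and let $\play$ be its (necessarily unique) actor, so that $\pure_\play \neq \purealt_\play$ and $\pure_{\others} = \purealt_{\others}$. Then $(\dflow)_{\dev} = \pay_\play(\purealt) - \pay_\play(\pure) = \pay_\play(\purealt_\play, \pure_{\others}) - \pay_\play(\pure_\play, \pure_{\others})$, which is $0$ by the indifference hypothesis applied to the actions $\pure_\play$ and $\purealt_\play$. Since every coordinate of $\dflow$ is indexed by such an edge, $\dflow = 0$, i.e.\ the game is non-strategic.

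There is essentially no obstacle here: the only points requiring a moment's care are the degenerate case $\purealt_\play = \alt\purealt_\play$, where the pair is not a genuine edge and the equality holds vacuously, and the elementary observation — immediate from the definition $\devs$ of the set of unilateral deviations — that any two profiles differing in exactly one coordinate are joined by a single edge of the response graph, so that no path-connectedness argument is required.
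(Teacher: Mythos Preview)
Your argument is correct: both implications follow immediately from the definition of the deviation map applied edge by edge, and you have handled the only minor subtlety (the degenerate case $\purealt_\play = \alt\purealt_\play$). Note, however, that the paper itself does not supply a proof of this proposition; it is simply quoted from \citet{candogan2011flows}, so there is no in-paper argument to compare against. Your direct unpacking of the definition of $\dflow$ is exactly the natural route and would be the expected proof.
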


Since $\dmap\from\pays\to\flows$ is a linear map between vector spaces, the space of non-strategic games is a linear subspace $\nonstpays \subset \pays$. It follows by the definition of non-strategic games that two games have the same deviation flow if and only if their difference is a non-strategic game, and in this case we say that the two games are \define{strategically equivalent}.

\paragraph{Normalized games}
Being strategically equivalent is an equivalence relation on the space $\pays$ of payoffs; one can select a representative element in each equivalence class $[\pay]$ by choosing a complement of $\nonstpays$ in $\pays$ and projecting $\pay \in \pays$ onto such complement along $\nonstpays$. A natural choice is that of using the \textit{orthogonal} complement $\normpays$ of the space of non-strategic games with respect to the Euclidean inner-product in $\pays$; we refer to such procedure as \textit{normalization}, and following~\citet{candogan2011flows} we give the following definition:
\begin{definition}
        A finite game in normal form $\gamefull$ is called \define{normalized} if
    \begin{equation}
        \pay \in \normpays \, .
    \end{equation}
\end{definition}
Normalized games enjoy the ``no-leftover'' property: the sum of any player’s payoffs over their choices is zero for any fixed choice by the other players.
\begin{proposition}[\citealt{candogan2011flows}]
    The game $\gamefull$ is normalized if and only if
    \begin{equation}
        \sum_{\purealt_\play \in \pures_\play}\pay(\purealt_\play, \pure_{\others}) = 0
    \end{equation}
    for all $\play \in \players$ and all $\pure_{\others} \in \pures_{\others}$.
\end{proposition}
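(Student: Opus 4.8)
The plan is to prove both directions of the equivalence by unpacking the definition of $\normpays$ as the orthogonal complement of $\nonstpays = \kerDgames$ with respect to the Euclidean inner product on $\pays = \pots \times \dots \times \pots$. First I would write the inner product explicitly: for $\pay, \alt\pay \in \pays$ we have $\inner{\pay}{\alt\pay} = \sum_{\play \in \players} \sum_{\pure \in \pures} \pay_\play(\pure)\,\alt\pay_\play(\pure)$. Then $\pay \in \normpays$ means exactly that $\inner{\pay}{w} = 0$ for every $w \in \nonstpays$. So the task reduces to identifying a convenient spanning set (or basis) of the non-strategic subspace $\nonstpays$, testing orthogonality against those generators, and showing the resulting linear conditions are equivalent to the stated ``no-leftover'' equations $\sum_{\purealt_\play \in \pures_\play} \pay_\play(\purealt_\play, \pure_{\others}) = 0$.

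The key step is therefore to describe $\nonstpays$ concretely. By the preceding proposition, a game $w$ is non-strategic iff, for each player $\play$, the value $w_\play(\purealt_\play, \pure_\others)$ does not depend on $\purealt_\play$; equivalently $w_\play$ is the ``lift'' of some function $g_\play \from \pures_\others \to \R$, namely $w_\play(\pure) = g_\play(\pure_\others)$. Hence $\nonstpays = \bigoplus_{\play} V_\play$ where $V_\play$ is the space of such lifts for player $\play$, and a spanning set consists of the indicator-type elements $w^{(\play, \pure_\others)}$ defined by $w^{(\play,\pure_\others)}_\play(\purealt_\play, \alt\pure_\others) = \mathbf{1}\{\alt\pure_\others = \pure_\others\}$ and zero in all other players' coordinates. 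I would then compute $\inner{\pay}{w^{(\play,\pure_\others)}}$: only the $\play$-th block of the inner product survives, and within it only the terms with $\others$-profile equal to $\pure_\others$, giving exactly $\sum_{\purealt_\play \in \pures_\play} \pay_\play(\purealt_\play, \pure_\others)$. Setting this to zero for all $\play$ and all $\pure_\others$ yields the forward direction ($\pay$ normalized $\Rightarrow$ no-leftover), and conversely, since these $w^{(\play,\pure_\others)}$ span $\nonstpays$, vanishing of the inner product against all of them implies $\pay \perp \nonstpays$, i.e.\ $\pay \in \normpays$.

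I expect the main (minor) obstacle to be bookkeeping: being careful that the Euclidean inner product on $\pays$ is the block-wise one, that the generators $w^{(\play,\pure_\others)}$ genuinely span $\nonstpays$ (this uses the proposition characterizing non-strategic games, which I may assume), and that no cross-player terms contaminate the computation. One could alternatively avoid choosing generators by arguing directly: $\pay \in \orth{\nonstpays}$ iff the orthogonal projection of $\pay$ onto each $V_\play$ vanishes, and the projection onto $V_\play$ is the conditional-averaging operator $(\pure) \mapsto \frac{1}{\npures_\play}\sum_{\purealt_\play} \pay_\play(\purealt_\play, \pure_\others)$ (averaging over player $\play$'s own action); its vanishing is precisely the no-leftover condition up to the nonzero factor $1/\npures_\play$. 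Either route is short; I would present the generator-based one as it makes the ``for all $\play$, for all $\pure_\others$'' quantifier structure transparent.
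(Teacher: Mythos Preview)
Your argument is correct. The paper itself does not supply a proof of this proposition: it is stated as a result of \citet{candogan2011flows} and left unproved in the appendix, so there is no ``paper's own proof'' to compare against. Your generator-based approach (or the equivalent averaging-projection variant you sketch) is exactly the standard way to establish this characterization, and the bookkeeping concerns you flag are indeed the only things to check; none of them causes trouble.
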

\paragraph{Decomposition in the space of payoffs}
After normalizing the space $\pays$ of payoffs\footnote{That is, after quotienting away the kernel of the deviation map.} one can translate the decomposition of feasible flows (\cref{eq:flows-decomposition} in the main text) from $\feasflows \subset \flows$ to $\pays$ by means of the \textit{Moore-Penrose pseudo-inverse} $\pinv\dmap\from\flows\to\pays$ of the deviation map.

Recall from Section 4 that the space of potential games is $\pgames \subset \pays$, and that the space of harmonic games is $\hgames \subset \pays$. Their intersections with the space $\normpays \subset \pays$ of normalized games give the spaces of \define{normalized potential games} and of \define{normalized harmonic games}:
\begin{definition}
    The space of \define{normalized potential games} is the linear subspace
    \begin{equation}
        \normppays \defeq \text{(potential games)} \cap \normpays \subset \pays \,.
    \end{equation}
        The space of \define{normalized harmonic games} is the linear subspace
    \begin{equation}
        \normhpays \defeq \text{(harmonic games)} \cap \normpays \subset \pays \,.
    \end{equation}
\end{definition}
\begin{theorem*}[\citet{candogan2011flows} --- Combinatorial Hodge decomposition of finite normal form games]
    The space $\pays$ is the direct sum of the subspaces of normalized potential games, normalized harmonic games, and non-strategic games:
    \begin{equation}
        \pays = \normppays \oplus \normhpays \oplus \nonstpays \, .
    \end{equation}
    Equivalently, given a finite normal form game $\gamefull$ the payoff function $\pay$ can be uniquely decomposed as the sum $\pay = \ppay + \hpay + \npay$ of a normalized potential game $\ppay$, a normalized harmonic game $\hpay$, and a non-strategic game $\npay$.
\end{theorem*}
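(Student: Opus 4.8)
The plan is to transport the already-granted Hodge decomposition of feasible flows, $\feasflows = \pflows \oplus \hflows$, back to the payoff space $\pays$ along the deviation map, and then to split off the kernel of $\dmap$ as the non-strategic summand. \textbf{Step 1 (the deviation map is an isomorphism modulo its kernel).} The map $\dmap\from\pays\to\flows$ is linear with $\ker\dmap = \nonstpays$ and $\im\dmap = \feasflows$. Since by definition $\normpays = \orth{\nonstpays}$, a routine argument — surjectivity by subtracting off the kernel component of any preimage, injectivity since $\nonstpays \cap \orth{\nonstpays} = 0$ — shows that the restriction $\dmap|_{\normpays}\from\normpays\to\feasflows$ is a linear isomorphism, whose inverse is the restriction to $\feasflows$ of the Moore–Penrose pseudo-inverse $\pinv\dmap$ (recall $\im\pinv\dmap = \orth{\ker\dmap} = \normpays$ and that $\pinv\dmap$ inverts $\dmap$ on $\im\dmap$).

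\textbf{Step 2 (pull back the flow decomposition and identify the pieces).} Define $\normppays$ and $\normhpays$ as the preimages of $\pflows$ and $\hflows$ under $\dmap|_{\normpays}$, equivalently $\normppays = \pinv\dmap(\pflows)$ and $\normhpays = \pinv\dmap(\hflows)$. Unwinding the definition: for $\pay \in \normpays$ one has $\pay \in \normppays \iff \dmap\pay \in \pflows \iff \pay$ is a potential game, by \cref{prop:potential-games} together with the fact that the potential games form $\pgames = \dmap^{-1}\pflows$; hence $\normppays = \pgames \cap \normpays$, precisely the subspace named in the statement, and likewise $\normhpays = \hgames \cap \normpays$ using $\hgames = \dmap^{-1}\hflows$. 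Applying the isomorphism $\dmap|_{\normpays}$ term by term to $\feasflows = \pflows \oplus \hflows$ yields $\normpays = \normppays \oplus \normhpays$ as a direct sum — not necessarily an orthogonal one, since $\dmap$ need not be an isometry, which is why the statement asserts only a direct sum.

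\textbf{Step 3 (add back the non-strategic part and conclude).} Because $\nonstpays$ is a subspace of the inner-product space $\pays$ and $\normpays = \orth{\nonstpays}$, we have the orthogonal splitting $\pays = \normpays \oplus \nonstpays$; combined with Step 2 this gives $\pays = \normppays \oplus \normhpays \oplus \nonstpays$. For the pointwise formulation, given $\pay \in \pays$ project orthogonally onto $\normpays$ to write $\pay = q + \npay$ with $q \in \normpays$ and $\npay \in \nonstpays$, then split $q = \ppay + \hpay$ via Step 2; uniqueness of $\ppay,\hpay,\npay$ is immediate from the directness of the sum.

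\textbf{Expected obstacle.} There is essentially no deep obstacle here: all the analytic substance lives in the combinatorial Hodge decomposition of feasible flows, which we are permitted to assume, and everything else is bookkeeping in finite-dimensional linear algebra. The only genuine subtlety is conceptual — one must transport the flow decomposition along the \emph{pseudo-inverse} (equivalently, along $\dmap$ restricted to $\orth{\ker\dmap}$) rather than along some naive section, and one must accept that the induced three-way splitting of $\pays$ is a priori merely a direct sum; recovering orthogonality of $\normppays$ and $\normhpays$ would require re-equipping $\pays$ with a pulled-back inner product, which the stated theorem does not demand.
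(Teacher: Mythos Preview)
Your argument is correct and complete. The paper itself does not supply a proof of this theorem --- it is stated with attribution to \citet{candogan2011flows} --- but the surrounding appendix discussion (normalizing via $\normproj = \pinv\dmap\dmap$, then transporting the flow-level decomposition through $\pinv\dmap$) is precisely the mechanism you spell out, so your write-up matches both the paper's implicit approach and the original source.
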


\paragraph{Decomposition components}
Recall that the deviation map is a linear map $\dmap\from\pays\to\flows$ from the space of payoffs to the space of flows. By the properties of the Moore-Penrose pseudo-inverse $\pinv\dmap\from\flows\to\pays$ of the deviation map~\citep{golanFoundationsLinearAlgebra1992}, the operator $\normproj \defeq \pinv{\dmap}\dmap\from\pays\to\pays$ is the orthogonal projection onto $\normpays = \orth{(\kerDgames)}$. Recall furthermore that $\pproj\from\flows\to\flows$ is the orthogonal projection onto the subspace of potential flows.

These operator can be used to obtained explicit expressions for the components of the decomposition in the space of payoffs:
\begin{proposition}[\citet{candogan2011flows}]
    Given the finite normal form game $\gamefull$ the components $\ppay, \hpay$ and $\npay$ of the combinatorial Hodge decomposition of finite normal form games are given by
    \begin{itemize}
        \item $\npay = \pay - \normproj\pay \in \npays$ \, ;
        \item $\ppay = \pinv{\dmap} \pproj \dmap \pay \in \ppays$ \, ;
        \item $\hpay = \pay - \npay - \ppay \in \hpays$ \, .
    \end{itemize}
\end{proposition}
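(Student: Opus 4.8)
The plan is to take as given the combinatorial Hodge decomposition of finite normal form games stated just above, so that $\pay = \ppay + \hpay + \npay$ with $\ppay \in \normppays$, $\hpay \in \normhpays$, $\npay \in \nonstpays$ uniquely; the task is then only to write each summand in closed form, and once $\npay$ and $\ppay$ are identified, $\hpay = \pay - \ppay - \npay$ follows at once. The engine throughout is two standard facts about the Moore--Penrose pseudo-inverse already recorded in the text~\citep{golanFoundationsLinearAlgebra1992}: $\normproj \defeq \pinv\dmap\dmap$ is the orthogonal projection of $\pays$ onto $\normpays = \orth{(\kerDgames)}$, and dually $\dmap\pinv\dmap$ is the orthogonal projection of $\flows$ onto $\feasflows = \im\dmap$.

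\emph{Identifying $\npay$.} By definition $\nonstpays = \kerDgames$, while comparing $\pays = \normppays \oplus \normhpays \oplus \nonstpays$ with $\pays = \normpays \oplus \nonstpays$ gives $\normppays \oplus \normhpays = \normpays = \orth{(\kerDgames)}$. Hence in $\pay = \npay + (\ppay + \hpay)$ the first term lies in $\kerDgames$ and the second in $\orth{(\kerDgames)}$, so this is exactly the orthogonal splitting of $\pay$ along $\kerDgames \oplus \orth{(\kerDgames)}$; applying $\normproj$ gives $\ppay + \hpay = \normproj\pay$, i.e. $\npay = \pay - \normproj\pay \in \nonstpays$.

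\emph{Identifying $\ppay$.} Next I would apply $\dmap$ to the decomposition. Since $\npay \in \kerDgames$ we have $\dmap\npay = 0$, so $\dmap\pay = \dmap\ppay + \dmap\hpay$. As $\ppay$ is a potential game, \cref{prop:potential-games} gives $\dmap\ppay \in \pflows$; as $\hpay$ is a harmonic game, $\dmap\hpay \in \hflows$. Thus $\dmap\pay = \dmap\ppay + \dmap\hpay$ decomposes the feasible flow $\dmap\pay$ into a potential and a harmonic flow, so by the combinatorial Hodge decomposition of feasible flows, $\feasflows = \pflows \oplus \hflows$ (with the potential summand equal to the orthogonal projection onto $\pflows$, cf.~\eqref{eq:flows-decomposition} and \eqref{eq:exact-projection}), it is \emph{the} Hodge splitting of $\dmap\pay$; hence $\dmap\ppay = \pproj\,\dmap\pay$. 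Applying $\pinv\dmap$ and using $\pinv\dmap\dmap = \normproj$ yields $\pinv\dmap\pproj\,\dmap\pay = \pinv\dmap\dmap\,\ppay = \normproj\ppay = \ppay$, the last step because $\ppay \in \normppays \subseteq \normpays$ is fixed by $\normproj$. Hence $\ppay = \pinv\dmap\pproj\,\dmap\pay$, and $\hpay = \pay - \npay - \ppay$.

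\emph{Checking memberships (optional, in lieu of uniqueness) and the main obstacle.} One can verify directly that the right-hand sides land where claimed: $\pay - \normproj\pay$ lies in $\kerDgames = \nonstpays$; $\pinv\dmap\pproj\,\dmap\pay$ lies in $\im\pinv\dmap = \orth{(\kerDgames)} = \normpays$ and has deviation flow $\dmap\pinv\dmap\pproj\,\dmap\pay = \pproj\,\dmap\pay \in \pflows$ (since $\dmap\pinv\dmap$ fixes $\feasflows \supseteq \pflows$), hence is a normalized potential game by \cref{prop:potential-games}; the leftover $\pay - \npay - \ppay$ is then forced into $\normhpays$. \textbf{The one delicate point} is the identification $\pproj\,\dmap\pay = \dmap\ppay$ --- that the potential projection of the game's deviation flow coincides with the deviation flow of the game's potential component. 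It rests on correctly combining $\dmap\npay = 0$, the characterizations $\dmap\ppay \in \pflows$ and $\dmap\hpay \in \hflows$, and the \emph{orthogonality and uniqueness} in the Hodge decomposition of feasible flows; the remainder is pseudo-inverse bookkeeping already set up in the text.
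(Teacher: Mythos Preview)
Your argument is correct: taking the decomposition theorem as given, you correctly identify $\npay$ via the orthogonal splitting $\pays = \normpays \oplus \nonstpays$ and the fact that $\normproj = \pinv\dmap\dmap$ projects onto $\normpays$; you then correctly recover $\ppay$ by applying $\dmap$, invoking the orthogonal Hodge splitting of $\feasflows$ to get $\dmap\ppay = \pproj\dmap\pay$, and pulling back via $\pinv\dmap$ using $\pinv\dmap\dmap\ppay = \normproj\ppay = \ppay$. The membership checks you sketch are also sound, and the point you flag as delicate is handled exactly right by the uniqueness in \eqref{eq:flows-decomposition}.

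There is, however, nothing to compare against: the paper does not supply a proof of this proposition. It is stated in the appendix as a result of \citet{candogan2011flows} and left without argument, the surrounding text only setting up the operators $\normproj$ and $\pproj$ that appear in the formulas. Your write-up therefore goes beyond what the paper offers here.
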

\newpage
\subsection{Additional Numerical Experiments} \label{app:exp}
Following the experiment illustrated in \cref{fig:random_learning_spne}, we grouped together games with similar potentialness, and applied \cref{alg:omd} with $\eta_0 = 2^3$ and $\beta=1/20$. However, instead of fixing the uniform strategy as the initial condition, we randomly sampled $25$ initial strategies for each game.
The first plot in Figure \ref{fig:random_learning_spne_sample} shows
the empirical probability of convergence of \cref{alg:omd} as a function of potentialness in various settings. The findings align with those of \cref{fig:random_learning_spne}, confirming that higher potentialness increases the likelihood of reaching equilibrium \textendash{} regardless of specific initializations.
\begin{figure}[h]
    \begin{center}
        \includegraphics[width=0.48\columnwidth]{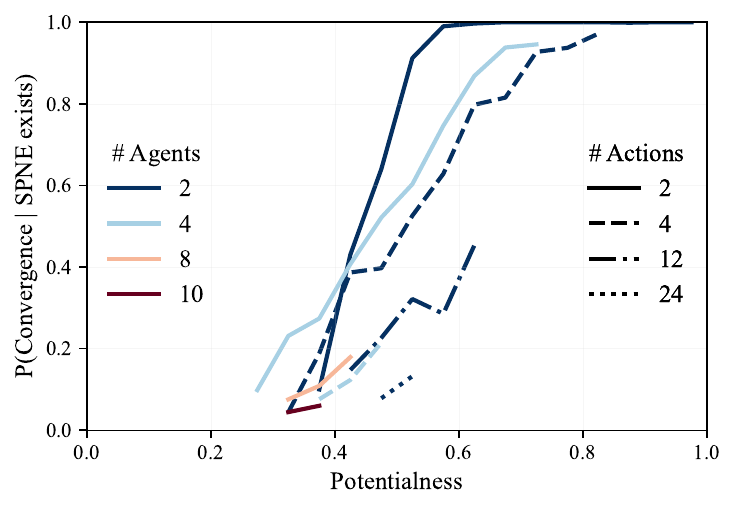}
        \includegraphics[width=0.48\columnwidth]{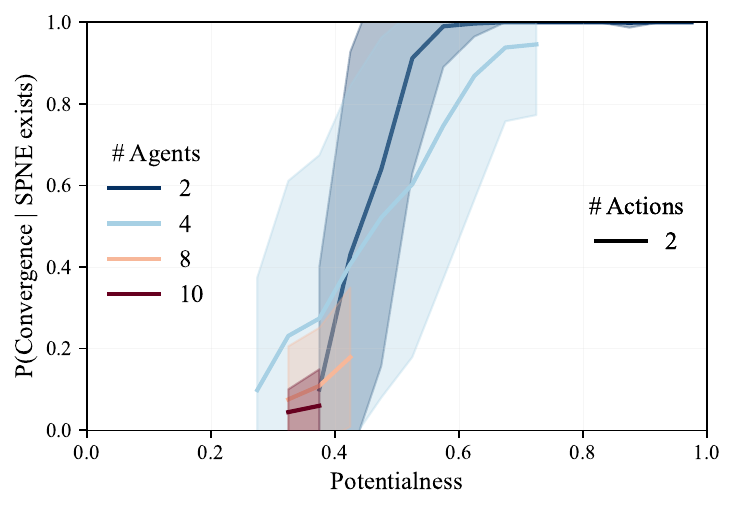}
        \caption{Empirical probability of convergence of \cref{alg:omd} in random games with SPNE as a function of potentialness in various settings. Left: randomly drawn initialization points. Right: standard deviation of the convergence probability.}
        \label{fig:random_learning_spne_sample}
    \end{center}
    \vskip 0.1in  
\end{figure}

On the second plot, we restrict our focus to settings with only $2$ actions, and include the standard deviation of the convergence probability (colored areas) along with empirical probability of convergence (solid lines) of \cref{alg:omd} as a function of potentialness.
We observe that the probability of convergence given a strict NE \textendash{} i.e., the size of the equilibrium’s basin of attraction \textendash{} varies significantly among games with similar potentialness, as reflected in the relatively large standard deviations. However, the overall relationship between high potentialness and high convergence remains robust in expectation.

\end{document}